\definecolor{light-gray}{gray}{0.8}
\def\nb0{{\mathbf{0}}}
\def\nb1{{\mathbf{1}}}
\newtheorem{lemma}{Lemma}
\newtheorem{thm}{Theorem}
\newtheorem{theorem}{Theorem}
\newtheorem{cor}{Corollary}
\def\E{\mathbb{E}}
\def\P{\mathbb{P}}
\def\R{\mathbb{R}}
\def\sir{\mathtt{SIR}}
\newcommand{\ud}{\, \mathrm{d}}
\newtheorem{defn}[thm]{Definition} 
\newtheorem{corollary}[cor]{Corollary}
\newcommand{\overbar}[1]{\mkern 1.5mu\overline{\mkern-1.5mu#1\mkern-1.5mu}\mkern 1.5mu}
\begin{document}
\title{\Huge Spatio-temporal Interference Correlation and Joint Coverage in Cellular Networks}
\author{\IEEEauthorblockN{Shankar Krishnan and Harpreet S. Dhillon \vspace{-5ex}}
\thanks{The authors are with Wireless@VT, Department of ECE, Virginia Tech, Blacksburg, VA, USA. Email: \{kshank93, hdhillon\}@vt.edu. The support of the US NSF (Grant CCF-1464293) is gratefully acknowledged. \hfill Last updated: \today} }
\maketitle
\begin{abstract}
This paper provides an analytical framework with foundations in stochastic geometry to characterize the spatio-temporal interference correlation as well as the joint coverage probability at two spatial locations in a cellular network. In particular, modeling the locations of cellular base stations (BSs) as a Poisson Point Process (PPP), we study interference correlation at two spatial locations $\ell_1$ and $\ell_2$ separated by a distance $v$, when the user follows \emph{closest BS association policy} at both spatial locations and moves from $\ell_1$ to $\ell_2$. With this user displacement, two scenarios can occur: i) the user is handed off to a new serving BS at $\ell_2$, or ii) no handoff occurs and the user is served by the same BS at both locations. After providing intermediate results such as probability of handoff and distance distributions of the serving BS at the two user locations, we use them to derive exact expressions for spatio-temporal interference correlation coefficient and joint coverage probability for any distance separation $v$. We also study two different handoff strategies: i) \emph{handoff skipping}, and ii) \emph{conventional handoffs}, and derive the expressions of joint coverage probability for both strategies. The exact analysis is not straightforward and involves a careful treatment of the neighborhood of the two spatial locations and the resulting handoff scenarios. To provide analytical insights, we also provide easy-to-use expressions for two special cases: i) static user ($v =0$) and ii) highly mobile user ($v \rightarrow \infty)$. As expected, our analysis shows that the interference correlation and joint coverage probability decrease with increasing $v$, with $v \rightarrow \infty$ corresponding to a completely uncorrelated scenario. Further design insights are also provided by studying the effect of few network/channel parameters such as BS density and path loss on the interference correlation.
\end{abstract}
\IEEEpeerreviewmaketitle
\begin{IEEEkeywords}
Stochastic geometry, interference correlation, joint coverage probability, Poisson point process, spatio-temporal correlation coefficient, user mobility.
\end{IEEEkeywords}
\allowdisplaybreaks
\section{Introduction}
Stochastic geometry has recently emerged as a popular tool for the modeling and analysis of large-scale wireless communication systems, such as wireless ad hoc and cellular networks~\cite{HaeB2013, Baccelli}. Irrespective of the type of wireless network being considered, the main idea behind these analyses is to model the locations of both the transmitters (Txs) and receivers (Rx) as point processes, often {\em independent} homogeneous Poisson point processes, and study {\em first-order} performance metrics, such as coverage probability, average rate, or interference distribution, as observed by a randomly chosen Rx, termed typical Rx \cite{HaeB2013}. While this approach of confining the analysis to a single observation point as well as almost always a single time-frequency resource {\em slice} is useful to get first-order insights, it is not sufficient for the characterization of spatio-temporal dependence (correlation) in the performance of a randomly chosen user in a wireless network. This requires the joint analysis of the observations made at two different spatial locations and/or different time-frequency resource {\em slices} which is known to be significantly more challenging than the more popular approach described above. As discussed next in detail, while this dependence has already been characterized for ad hoc networks, the same is not true for cellular networks for which the analysis is known to be far more challenging. Developing new tools to facilitate the exact characterization of spatio-temporal interference correlation as well as joint coverage probability in cellular networks is hence the main goal of this paper.
\subsection{Related work and Motivation} 
Correlation in interference observed at different locations or the same location at different times has been studied extensively for wireless networks, {\em albeit} almost exclusively in the context of wireless \emph{ad hoc} networks. In this ad hoc network setup, wireless nodes are usually assumed to transmit independently according to a random access scheme, such as ALOHA, with a certain transmit probability \cite{HaeB2013, Baccelli}. Interference is spatially correlated as it originates from the same set of transmitters even in the presence of independent fading. Similarly, interference is also temporally correlated since a subset of the same set of nodes transmit in different time slots.  The authors in \cite{Ganti} first characterized this interference correlation in ad hoc networks in terms of spatio-temporal correlation coefficient and showed that it is directly proportional to the random access probability and inversely proportional to the second moment of fading power gain. The authors in \cite{RelayCorrelation} extended this work and derived the joint probabilities of successful packet delivery at multiple receivers under spatially and temporally dependent interference. They showed that interference correlation significantly impacts the packet delivery probability. Along the same lines, \cite{SIMO} investigated interference correlation in multi-antenna receivers and showed that a diversity loss occurs due to interference correlation. Other related works include studying the network performance (in terms of outage probability and diversity order) for a decode-and-forward, cooperative relaying system \cite{Ralph1, CoopRelaying, Cooperative, CoopStrategies} under spatially and temporally correlated interference. The effect of interference correlation on the performance of multi-antenna communication systems under Maximal Ratio Combining is also studied recently in \cite{MRC1,MRC2}.

For the rest of this discussion, we note that the prior art on interference correlation can be broadly classified into two categories. The first line of work deals with the study of temporal correlation in \emph{static networks}, where nodes are static or have low mobility. Most of the works discussed above fall in this category. Additionally, considering the temporal correlation of interference over different time slots, \cite{DiversityPoly} derived closed form expressions for joint outage/success probability over $n$ time slots (transmissions) and showed that temporal diversity gain due to retransmissions diminishes with high interference correlation.  Along the same lines, \cite{BlockFading} investigated temporal correlation in the interference power for a correlated fading (flat fading, Rayleigh block fading) and correlated user traffic (slotted ALOHA) scenario. A key step in these studies is the characterization of moments of conditional success probability (conditioned on the point process), which have also been used recently for the derivation of {\em meta distribution} of the signal to interference ratio in both ad hoc and cellular networks in \cite{LowMobility,metadistribution}.
%


The second line of work deals with mobile networks, where mobility of nodes introduces randomness and thereby decorrelates interference across space and time. If the nodes are considered highly mobile ($v\rightarrow \infty$) in each time slot, the set of possible interferers also change rapidly and thus the interference becomes completely uncorrelated over time. However in real life scenarios, the nodes in a wireless network have finite mobility \cite{TVT} and it is thus important to study the interference correlation in finite mobile networks. For a mobile ad hoc network, the authors in \cite{Gong} studied temporal correlation in interference and outage under different mobility models such as Random Way point, Random Walk and Discrete-time Brownian motion. Specifically, they characterized correlation coefficient for interference and showed that correlation decreases with the increase in the mean speed of the nodes. More recently, \cite{BPPCorrelation} captured the effect of mobility on interference and outage correlation in finite networks (network with finite boundaries) and showed that interference correlation is location-dependent, being higher close to the network edge. Capturing the correlation in user locations in wireless networks, the authors in \cite{Cluster} studied interference correlation in clustered networks (modeled as Matern or Thomas cluster process) and showed that clustering of interferer locations enhances interference correlation.

Although there has been substantial work quantifying interference correlation in wireless ad hoc networks, there has been limited work studying correlation in cellular networks. Taking into account the temporal/spatial correlation in cellular networks, \cite{ICIC} analyzed the benefits of inter-cell interference coordination (ICIC) with BS coordination and intra-cell diversity (ICD) with selection combining in cellular networks. Their analysis of ICD showed that a diversity gain can always be obtained in a cellular setting with strongest BS association, in contrast with the conclusion drawn from ad hoc networks in \cite{DiversityPoly}. Studying spatiotemporal cooperation among BSs in a heterogeneous cellular network, \cite{Minero} studied different diversity exploiting techniques such as joint transmission and base station silencing. The authors in \cite{CorrelationHCN} studied the correlation of interference and link successes in heterogeneous cellular network with multiple tiers (different transmit powers and BS densities) and different cell association policies. However their analysis is not accurate as the authors also include the received power from the serving BS in the interference analysis. In this paper, we provide the exact analysis of interference correlation in cellular networks.
 
Before we describe our contributions in the next subsection, we provide a brief overview of the key differences in the analysis of ad hoc and cellular networks and explain why cellular analysis is more challenging. Ad hoc networks are usually modeled as Poisson bipolar networks \cite{Bipolar} with fixed link distance between the transmitter and the node of interest (receiver). The interference field can therefore be modeled as an infinite homogeneous PPP. However, in cellular networks, the serving BS has to be chosen based on some cell association strategy, such as maximum average received power based \cite{MaxRcvdPower} or highest instantaneous SINR based \cite{KTier}. This has an important implication on the modeling of interference field. In order to fix this key idea, let us consider {\em maximum average received power based cell association} in a single-tier cellular network in which the typical user connects to its closest BS. For this association strategy, the interference comprises of all active BSs farther than the closest BS (serving BS) and hence the location of this serving BS plays an important role in the analysis of any performance metric that depends upon the received power of the desired signal and/or the interfering signal. Moreover, to study interference correlation in two spatial locations (a finite distance apart), it is important to characterize the distance of the serving BS at both spatial locations. Also note that there is a certain probability that the serving BS at the second location is the same as the one in the first location. This dependence among serving BSs in cellular networks is characterized by the \emph{handoff rate} \cite{Handoff}, which is the probability that the user is handed off to a new serving BS as it moves from one location to another. This characterization of handoff is not required in ad hoc networks. In this work, we present exact characterization of both interference correlation as well as the joint coverage probability while incorporating all such dependencies. To the best of our knowledge, we are first ones to characterize these correlation-based metrics for cellular networks.

\subsection{Contributions and Outcomes} 
\subsubsection*{Distance distribution of serving BS at two spatial locations}
Incorporating spatio-temporal corrrelation, we study the network performance when a typical user moves a distance $v$ from its initial location in a cellular network.  The user follows closest BS association i.e. connects to the closest BS at both user locations. We first identify that the user displacement can result in two scenarios: i) \emph{No Handoff}: the user is associated with the same BS at both the user locations (same BS is the closest at both locations), and ii) \emph{Handoff}: A different BS is closer to the user at the second location and therefore handoff occurs. For both these scenarios, we provide joint distribution of the distances from the two locations to their respective closest BSs. 
\subsubsection*{Spatio-temporal interference correlation coefficient for cellular networks}
After characterizing distance distributions, we first study the spatio-temporal interference correlation coefficient of a typical user in a cellular network under closest BS association policy by deriving expressions for mean, variance and first order cross moments of interference at two user locations in the network. We then show that interference correlation decreases with the distance $v$ between the two spatial locations, becoming uncorrelated at locations far apart. As a special case, we derive an easy-to-use expression for the temporal interference correlation coefficient, i.e., when the user is static at a given spatial location. The temporal correlation coefficient in cellular networks is shown to exhibit a \emph{bell curve} relationship w.r.t. BS density and decrease with path loss for large $v$. This \emph{bell curve trend} does not exist in the case of ad hoc networks.
\subsubsection*{Exact analysis of joint coverage probability at two spatial locations}
Joint coverage probability at two spatial locations completely characterizes the dependence in link successes (coverage) at the two locations in the network. We develop exact expressions for the joint coverage probability for two spatial locations (separated by a distance $v$) under two handoff strategies: i) \emph{handoff skipping} and ii) \emph{conventional handoffs}. With handoff skipping, a user initially connected to its closest BS at a given location continues to be associated with the same BS irrespective of the distance it moves and skips all possible handoffs to a new serving BS. Handoff skipping is more relevant to ultra-dense networks, where a user can save handoff delays/overheads by skipping certain handoffs. Next, we look at the conventional handoff scenario where a user always switches its association to the closest BS as it moves along in the network. For both handoff strategies, we show that joint coverage probability at two spatial locations decreases with the separation in the two locations (i.e. decrease in correlation).
\section{System Model} \label{sec2}

We consider a cellular network where the base stations are modeled as a homogeneous PPP $\Phi$ of intensity $\lambda$. As noted already, our main objective is to study how the interference experienced by a typical user is correlated across two spatial locations in this network. At any spatial location in the network, the typical user follows the \emph{closest BS association policy} i.e. connects to its closest BS as that maximizes its average received power. Also, we study how the interference is correlated temporally as the typical user is subject to interference from the same subset of BSs across time.
\begin{figure}[t!]
\centering{
\includegraphics[width=.90\linewidth]{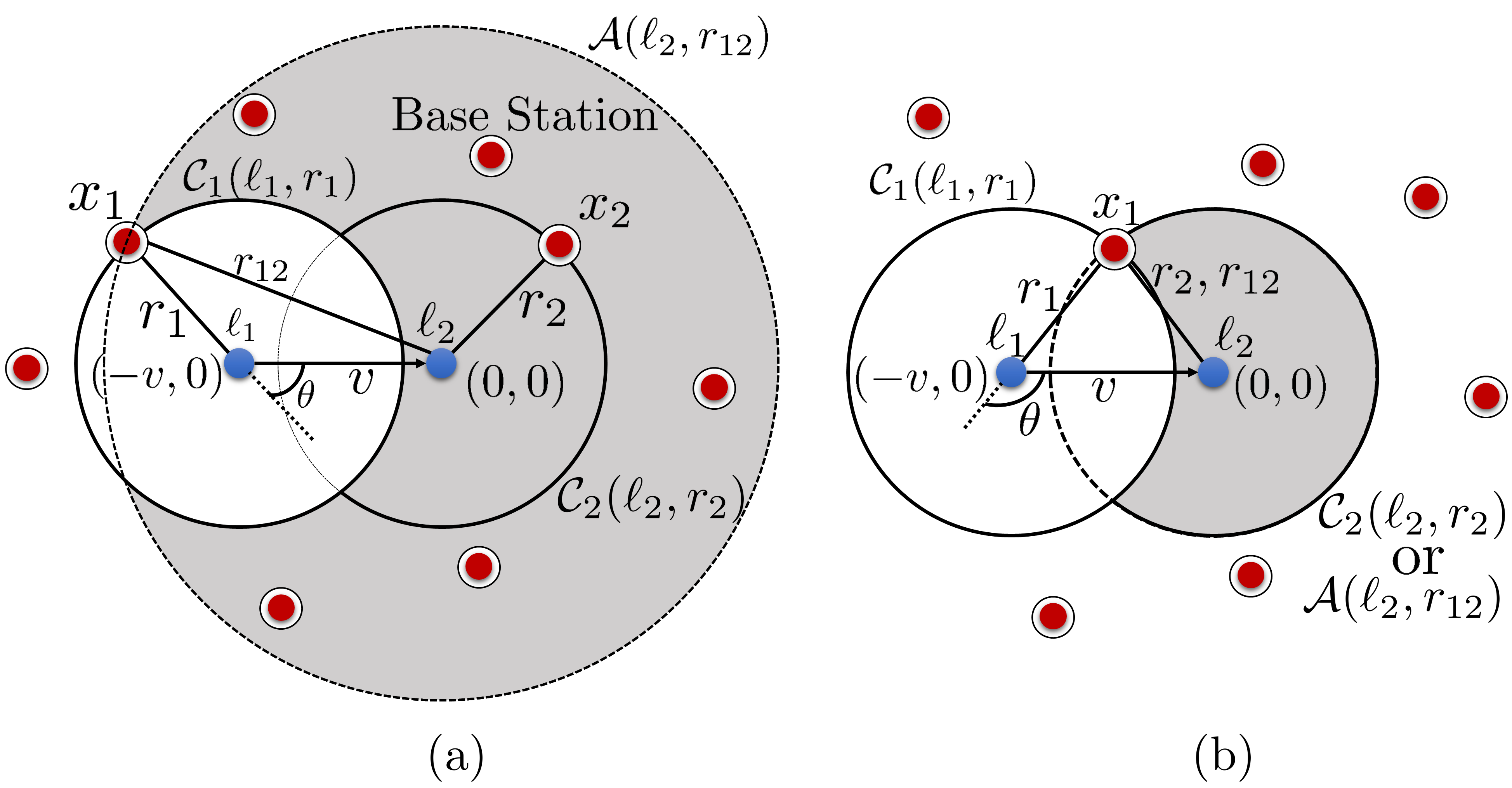}
\caption{System model when a typical user (denoted by \emph{blue dot}) moves from location 1 ($\ell_1$) to location 2 ($\ell_2$). (a) \emph{Handoff scenario}: user is handed off to a new serving BS $x_2$ at $\ell_2$ as it is closer than the serving BS $x_1$ at $\ell_1$, and (b) \emph{No Handoff scenario}: user is served by the same BS $x_1$ at both locations $\ell_1$ and $\ell_2$ and therby no handoffs occur.}
\label{Fig:SystemModel}
}
\end{figure}
\subsection{System Setup and Key assumptions}
Consider a typical user that connects to its closest BS $x_1 \in \Phi$ at location 1 ($\ell_1$) as shown in Fig. \ref{Fig:SystemModel}. The user now moves a distance $v$ at an angle $\theta$ to the serving BS $x_1$ at $\ell_1$  and shifts to location 2 ($\ell_2$). With this user displacement, two scenarios can arise : i) a different BS $x_2 \in \Phi$ is located closer to the user at location 2 than the serving BS $x_1$ at location 1 (Fig. \ref{Fig:SystemModel} (a)) or ii) the BS $x_1$ (serving BS at location 1) is still the closest BS to the user at location 2 (Fig. \ref{Fig:SystemModel} (b)). The first scenario corresponds to the case where handoff occurs to a new serving BS $x_2$ while the second scenario results in no handoff and a continued connection to the BS $x_1$. In this work, we study how the interference is correlated across these two user locations $\ell_1$ and $\ell_2$. Before that, we need to define the distance distributions of the serving BS at the two spatial locations and understand the above mentioned scenarios (handoff and no handoff) in a bit more detail.

Let $R_1$ and $R_2$ be the random variables denoting the distance of the serving (closest) BS to the user at locations 1 and 2 respectively, with $r_1$ and $r_2$ being their realizations. As can be seen from Fig. \ref{Fig:SystemModel}, the user's location 2 is at a distance $r_{12} = \sqrt{r_1^2+v^2+2r_1v\cos \theta}$ (using law of cosines) from BS $x_1$. Let $\Theta$ be a uniform random variable in $[0,\pi]$ denoting the angle of user displacement as the user moves from $\ell_1$ to $\ell_2$ with $\theta$ being its realization. Therefore its PDF is given as $f_{\Theta}(\theta) = 1/\pi$. Denote by $\mathcal{C}_1(\ell_1,r_1)$ - a circle centered at $\ell_1$ with radius $r_1$ and  $\mathcal{C}_2(\ell_2,r_2)$ - a circle centered at $\ell_2$ with radius $r_2$. And let $\mathcal{A}(\ell_2,r_{12})$ be a circle centered at $\ell_2$ with a radius of $r_{12}$, which will be used for the handoff analysis. Before that, we state an intermediate result which will be useful in our analysis.
\begin{defn} \label{Definition 1}
Consider two intersecting circles $\mathcal{C}_1(\ell_1,r_1)$ and  $\mathcal{A}(\ell_2,r_{12})$ with centers separated by a distance $v$ and an angle $\theta$ as shown in Fig. \ref{Fig:SystemModel}. The shaded region in the figure is denoted by $\mathbf{C}(\ell_1,r_1,\ell_2,r_{12},v)$ and its area given as
\begin{align}
|\mathbf{C}(\ell_1,r_1,\ell_2,r_{12},v)| = r_{12}^2\bigg[\pi - \theta + \sin^{-1}\bigg(\frac{v\sin\theta}{r_{12}}\bigg)\bigg]  - r_1^2(\pi - \theta) + r_1v\sin\theta.
\end{align}
\begin{proof}
The area of the shaded region is 
\begin{align}
\notag |\mathbf{C}(\ell_1,r_1,\ell_2,r_{12},v)| &= |\mathcal{A}(\ell_2,r_{12}) \setminus \mathcal{A}(\ell_2,r_{12}) \cap \mathcal{C}_1(\ell_1,r_1)| \\\notag
&= \pi r_{12}^2 - |\mathcal{A}(\ell_2,r_{12}) \cap \mathcal{C}_1(\ell_1,r_1)|
\end{align}
where $|.|$ denotes the area, and the result follows by using the area of intersection of two circles as done in ~\cite[Theorem 1]{Handoff}.
\end{proof}
\end{defn}
As the user moves from $\ell_1$ to $\ell_2$, whether handoff occurs (Fig. \ref{Fig:SystemModel}(a)) or not (Fig. \ref{Fig:SystemModel}(b)) is dictated by the existence of a BS within the circle $\mathcal{A}(\ell_2,r_{12})$. Let $H$ be the event that handoff occurs as the user moves from $\ell_1$ to $\ell_2$ and $\bar{H}$ be the complementary event (no handoffs occur). The probability of handoff $\P(H)$ is derived in ~\cite[Theorem 1]{Handoff} for a similar setup and is restated below in Lemma \ref{Lemma 1} for completeness. We use a shorthand notation $\mathcal{C}_1$, $\mathcal{C}_2$ and $\mathcal{A}$ for denoting the circles defined before for the sake of simplicity.
\begin{lemma} \label{Lemma 1}
Conditioned on $r_1$ and $\theta$, the probability of handoff as the user moves a distance $v$ at an angle $\theta$ from location 1 to 2 in a PPP of BS density $\lambda$ is
\begin{align}
\P(H|r_1,\theta) &= 1- \exp\bigg( - \lambda \bigg(r_{12}^2\bigg[\pi - \theta + \sin^{-1}\bigg(\frac{v\sin\theta}{r_{12}}\bigg)\bigg] - r_1^2(\pi - \theta) + r_1v\sin\theta\bigg)\bigg). \label{Eq:Prob handoff}
\end{align}
\begin{proof}
From Fig. \ref{Fig:SystemModel}, for a typical user initially connected to its closest BS $x_1$ at distance $r_1$ and moving to a new location $\ell_2$ at distance $r_{12}$ from BS $x_1$, a handoff does not occur if there is no BS closer than $r_{12}$ to the user at $\ell_2$, hence:
\begin{align}
\notag 1- \P(H|r_1,\theta) & \stackrel{(a)}= \P(N(|\mathcal{A}|) = 1 | N(|\mathcal{A} \cap \mathcal{C}_1|) = 1) \\\notag
&= \P(N(|\mathcal{A} \setminus \mathcal{A} \cap \mathcal{C}_1|) = 0) \\\notag
&= \exp(-\lambda \: |\mathbf{C}(\ell_1,r_1,\ell_2,r_{12},v)|)
\end{align} 
where (a) follows because only one BS $x_1$ lies in the region $\mathcal{A}$ for a handoff scenario and the result follows by using $|\mathbf{C}(\ell_1,r_1,\ell_2,r_{12},v)|$ from Definition \ref{Definition 1}.
\end{proof}
\end{lemma}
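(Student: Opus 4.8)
The plan is to compute the complementary probability $\P(\bar{H}|r_1,\theta)$ that no handoff occurs and then subtract it from one. The governing observation is geometric: the serving BS $x_1$ at $\ell_1$ lies at distance $r_1$ from $\ell_1$ and, by the law of cosines, at distance $r_{12}=\sqrt{r_1^2+v^2+2r_1v\cos\theta}$ from $\ell_2$. A handoff is triggered at $\ell_2$ exactly when some BS of $\Phi$ is strictly closer to $\ell_2$ than $x_1$ is, i.e. when $\Phi$ has a point inside the disk $\mathcal{A}(\ell_2,r_{12})$ other than $x_1$ itself. Hence $\bar{H}$ is the event that $\Phi$ places no BS inside $\mathcal{A}(\ell_2,r_{12})$ apart from $x_1$.

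The next step is to localize this void event to the lune-shaped region. Because the user uses closest-BS association at $\ell_1$ and $R_1=r_1$, the open disk $\mathcal{C}_1(\ell_1,r_1)$ contains no BS ($x_1$ lies on its boundary), so the overlap $\mathcal{A}\cap\mathcal{C}_1$ cannot contain a BS closer to $\ell_2$ than $x_1$; any such competing BS must lie in $\mathcal{A}\setminus\mathcal{C}_1=\mathbf{C}(\ell_1,r_1,\ell_2,r_{12},v)$. Therefore
\[
\P(\bar{H}|r_1,\theta)=\P\big(\Phi(\mathbf{C}(\ell_1,r_1,\ell_2,r_{12},v))=0\big).
\]

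To evaluate the right-hand side I would use the independence of a PPP over disjoint regions: conditioning on $x_1$ being the nearest BS to $\ell_1$ at distance $r_1$ only fixes the configuration inside $\mathcal{C}_1$, so the restriction of $\Phi$ to $\mathbb{R}^2\setminus\mathcal{C}_1$, which contains $\mathbf{C}$, is still a homogeneous PPP of intensity $\lambda$. The void probability of the PPP then gives $\P(\Phi(\mathbf{C})=0)=\exp(-\lambda|\mathbf{C}(\ell_1,r_1,\ell_2,r_{12},v)|)$; substituting the area from Definition~\ref{Definition 1} and passing to the complement yields \eqref{Eq:Prob handoff}. This reproduces the handoff characterization of \cite[Theorem~1]{Handoff}.

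The argument is mostly bookkeeping rather than hard analysis; the step that needs care is the conditioning — making precise that the nearest-neighbor conditioning at $\ell_1$ leaves the point process outside $\mathcal{C}_1$ untouched, and that the handoff-triggering BSs are exactly those falling in $\mathbf{C}$ (not, e.g., somewhere in $\mathcal{A}\cap\mathcal{C}_1$, which is already known to be BS-free). It is also worth checking the degenerate regime $\mathcal{A}\subseteq\mathcal{C}_1$ (which can occur when $r_{12}\le r_1$, i.e. $\cos\theta$ sufficiently negative): there $|\mathbf{C}|=0$ and the formula correctly returns $\P(H|r_1,\theta)=0$, consistent with $x_1$ being automatically the closest BS at $\ell_2$. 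Averaging \eqref{Eq:Prob handoff} over $\Theta\sim\mathrm{Unif}[0,\pi]$ and the density of $R_1$ to obtain an unconditional handoff probability lies beyond this lemma.
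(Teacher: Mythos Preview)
Your proposal is correct and follows essentially the same route as the paper: identify $\bar{H}$ with the void event for the lune $\mathbf{C}=\mathcal{A}\setminus\mathcal{C}_1$, invoke the PPP void probability, and plug in the area from Definition~\ref{Definition 1}. Your treatment is in fact a bit more explicit than the paper's about why the nearest-neighbor conditioning at $\ell_1$ leaves the process outside $\mathcal{C}_1$ an unconditioned PPP, and about the boundary case $|\mathbf{C}|=0$.
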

We now state some observations about the system model depending on the user displacement $v$ in Remarks 1 and 2.
{\remark \label{Remark 1}
Conditioned on the occurence of handoff (Fig. \ref{Fig:SystemModel} (a)) when a user moves a distance $v$ from $\ell_1$ to $\ell_2$ , one of the following 3 cases arises for circles $\mathcal{C}_1$ and $\mathcal{C}_2$ : (i) {Case 1:} Disjoint circles ($v \geqslant r_1+r_2$), (ii) {Case 2:} Intersecting circles ($ r_2 - r_1 < v < r_1 +r_2 $), and (iii) {Case 3:} Engulfed circles ($v \leqslant r_2 - r_1$). This insight will be useful for the analysis.}
\begin{figure}[t!]
\centering{
\includegraphics[width=.90\linewidth]{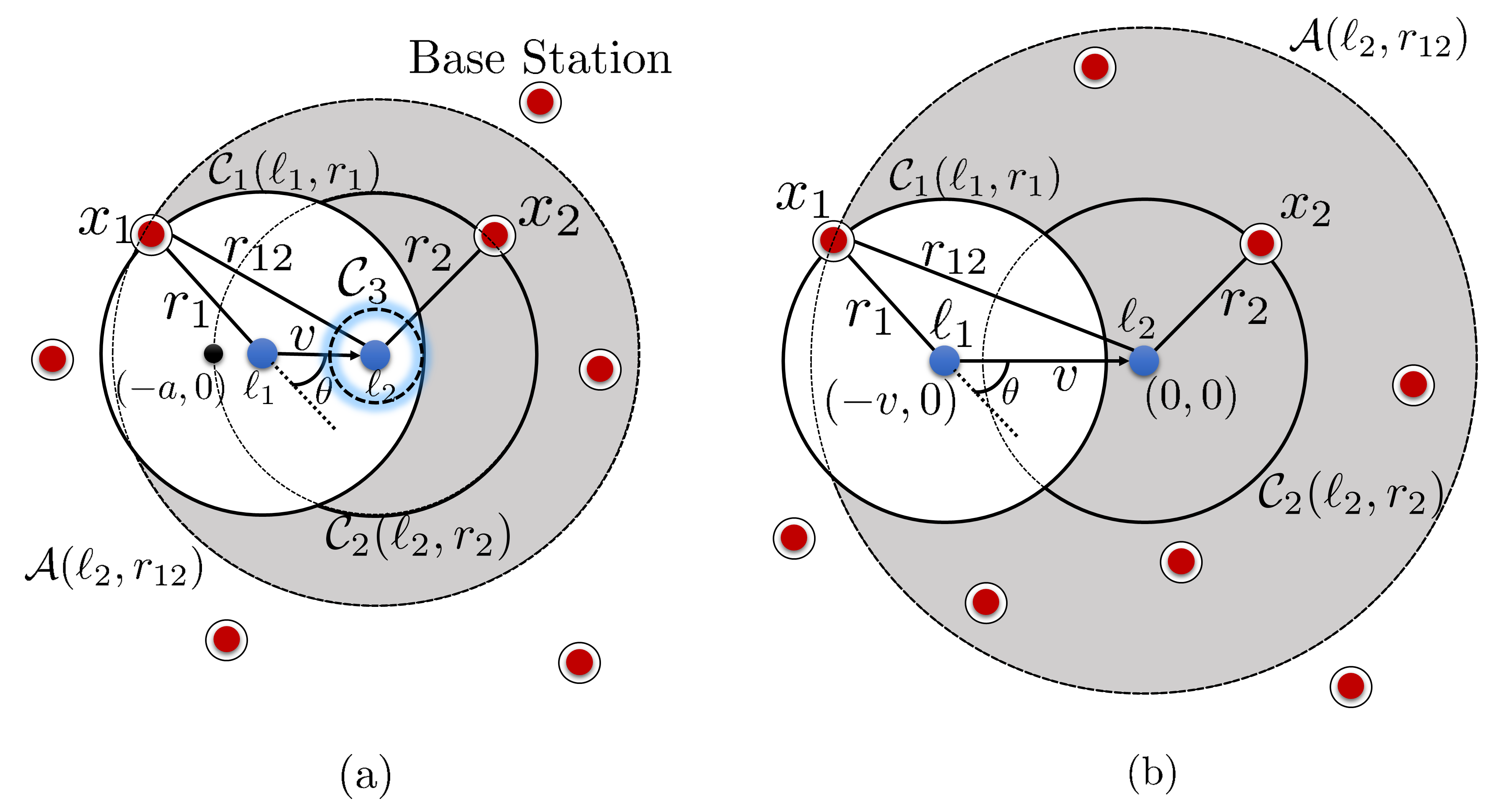}
\caption{Different scenarios based on user displacement $v$ from $\ell_1$ to $\ell_2$. (a) Scenario 1 ($v<r_1$), where $\mathcal{C}_3(\ell_2,r_1-v)$ is an exclusion zone and (b) Scenario 2 ($v \geq r_1$), where there is no exclusion zone.}
\label{Fig:Scenarios}
}
\end{figure}
{\remark
For the system model shown in Fig. \ref{Fig:SystemModel}, there exists two scenarios based on the distance $v$ moved by the user from $\ell_1$ to $\ell_2$: i) Scenario 1 ($v < r_1$) and ii) Scenario 2 ($v \geq r_1$) (see Fig. \ref{Fig:Scenarios}). For scenario 1, even after user displacement, the user is still inside $\mathcal{C}_1$ and hence no BS lies can lie within a distance $r_1-v$ from $\ell_2$ i.e. the closest BS is atleast a distance of $r_1-v$ from $\ell_2$ or $r_2 \geq r_1 -v$. In scenario 2, the user moves a larger distance ($v \geq r_1$) and no such condition exist for the serving distance $R_2$ i.e. $r_2 \geq 0$. We define $z_1 = \max(0,r_1-v)$   and circle $\mathcal{C}_3(\ell_2,z_1)$ to handle the two scenarios together, which will be discussed later.\\
}

As stated before, $R_1$ is the distance of the closest BS from location 1 (closest point of PPP $\Phi$ from $\ell_1$). Hence the distribution of $R_1$ is given by the null probability of the PPP $\Phi$ and is thus given as $f_{R_1} (r_1) = 2\lambda \pi r_1 e^{-\lambda \pi r_1^2}$ \cite{HaeB2013}. The distribution of $R_2$, the distance of the serving BS at location 2 depends whether a handoff occurs or not when user moves from $\ell_1$ to $\ell_2$. If there is no handoff, the serving BS at location 2 is same as the one at location 1. In that scenario, $r_2 = r_{12} = \sqrt{r_1^2+v^2+2r_1v\cos \theta}$ as evident from Fig. \ref{Fig:SystemModel}(b) and its distribution can be obtained accordingly from the distributions of $R_1$ and $\Theta$. However when handoff occurs, the distribution of $R_2$ is not straightforward and is derived next in Lemma 2.
\begin{lemma} \label{Lemma 2}
Conditioned on $r_1, \theta$ and the occurrence of a handoff (event $H$) as the user moves a distance $v$ at an angle $\theta$ from $\ell_1$ to $\ell_2$ in a PPP of intensity $\lambda$ as shown in Fig. \ref{Fig:SystemModel}, the CDF of distance $R_2$ of the serving BS at location 2 is given as
\begin{align*} \label{Eqn:R2 Handoff}
F_{R_2|H}(r_2|H,r_1,\theta) &= \left\{
     \begin{array}{lr}
       \frac{1-\exp(-\lambda \: |\mathbf{C}(\ell_1,r_1,\ell_2,r_2,v)|)}{1-\exp(-\lambda \: |\mathbf{C}(\ell_1,r_1,\ell_2,r_{12},v)|))},  \qquad r_2 \in [z_1,r_{12}] \\
        0, \qquad \qquad \qquad \qquad \qquad \qquad \: \: \:  \text{otherwise}\\
      \end{array}
   \right. ,
\end{align*}
where $z_1 = {\tt max}(0,r_1-v)$.
\end{lemma}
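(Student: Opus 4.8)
\emph{Proof strategy.} The plan is to write the conditional CDF via Bayes' rule as
\[
F_{R_2|H}(r_2\,|\,H,r_1,\theta)=\frac{\P\big(\{R_2\le r_2\}\cap H\,\big|\,r_1,\theta\big)}{\P(H\,|\,r_1,\theta)},
\]
and then to evaluate the numerator by a void-probability argument for $\Phi$ after conditioning on the serving distance $R_1=r_1$ at $\ell_1$. The denominator is already in hand: it equals $1-\exp\!\big(-\lambda\,|\mathbf{C}(\ell_1,r_1,\ell_2,r_{12},v)|\big)$ by Lemma \ref{Lemma 1}.

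First I would fix the support. Conditioned on $H$, the new serving BS is strictly closer to $\ell_2$ than $x_1$, which sits at distance $r_{12}$, so $R_2<r_{12}$ almost surely; hence $F_{R_2|H}=1$ for $r_2\ge r_{12}$, and the displayed fraction continuously attains the value $1$ at $r_2=r_{12}$ since numerator and denominator coincide there. For the lower end I would invoke the exclusion-zone observation stated just before the lemma: when $v<r_1$ the point $\ell_2$ still lies inside $\mathcal{C}_1(\ell_1,r_1)$, so no BS can be within $r_1-v$ of $\ell_2$ (this is the exclusion zone $\mathcal{C}_3(\ell_2,z_1)$), whereas when $v\ge r_1$ there is no such constraint; with $z_1=\max(0,r_1-v)$ this gives $R_2\ge z_1$ almost surely, which is the ``otherwise'' branch.

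Next, for $r_2\in[z_1,r_{12})$ I would use the set inclusion $\{R_2\le r_2\}\subseteq H$, valid because a point of $\Phi$ within distance $r_2<r_{12}$ of $\ell_2$ is closer than $x_1$ and therefore triggers a handoff; hence the numerator collapses to $\P(R_2\le r_2\,|\,r_1,\theta)=\P\big(\Phi\cap\mathcal{A}(\ell_2,r_2)\ne\emptyset\,\big|\,r_1,\theta\big)$. Conditioning on $R_1=r_1$ and using the standard property of the PPP that the restriction of $\Phi$ to the complement of $\mathcal{C}_1(\ell_1,r_1)$ remains a homogeneous PPP of intensity $\lambda$ while the interior of $\mathcal{C}_1(\ell_1,r_1)$ is void (the displacement angle $\theta$ being independent of $\Phi$), the only part of $\mathcal{A}(\ell_2,r_2)$ that can hold a point is $\mathcal{A}(\ell_2,r_2)\setminus\mathcal{C}_1(\ell_1,r_1)$, which by Definition \ref{Definition 1} with $r_{12}$ replaced by $r_2$ is exactly the crescent $\mathbf{C}(\ell_1,r_1,\ell_2,r_2,v)$. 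The PPP void probability then gives $\P(R_2\le r_2\,|\,r_1,\theta)=1-\exp\!\big(-\lambda\,|\mathbf{C}(\ell_1,r_1,\ell_2,r_2,v)|\big)$, and dividing by the denominator from Lemma \ref{Lemma 1} produces the claimed formula on $[z_1,r_{12}]$.

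The hard part will be the careful bookkeeping of the conditioning and the geometry rather than any computation. Specifically: (i) making rigorous, via a limiting argument on $R_1\in(r_1,r_1+\mathrm{d}r_1)$ or via Slivnyak's theorem, that conditioning on $R_1=r_1$ leaves an intensity-$\lambda$ PPP outside $\mathcal{C}_1$ and nothing inside, and that both $H$ and $\{R_2\le r_2\}$ then become events on $\Phi$ restricted to the crescents $\mathbf{C}(\ell_1,r_1,\ell_2,r_{12},v)$ and $\mathbf{C}(\ell_1,r_1,\ell_2,r_2,v)$, the latter contained in the former since $r_2\le r_{12}$; (ii) handling $v<r_1$ and $v\ge r_1$ uniformly through $z_1$, in particular the sub-regime where $\mathcal{A}(\ell_2,r_2)$ is disjoint from $\mathcal{C}_1$, so $|\mathbf{C}(\ell_1,r_1,\ell_2,r_2,v)|=\pi r_2^2$, versus the sub-regime where the two circles overlap; and (iii) keeping the boundary BS $x_1$ (at distance exactly $r_{12}$ from $\ell_2$) out of the handoff count, so that $H=\{\Phi\cap\mathbf{C}(\ell_1,r_1,\ell_2,r_{12},v)\ne\emptyset\}$ holds as needed.
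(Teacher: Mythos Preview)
Your proposal is correct and follows essentially the same route as the paper: write the conditional CDF via Bayes' rule, identify the support $[z_1,r_{12}]$ from the exclusion-zone and handoff constraints, and evaluate the numerator as one minus the void probability of the crescent $\mathbf{C}(\ell_1,r_1,\ell_2,r_2,v)$, with the denominator supplied by Lemma~\ref{Lemma 1}. Your version is in fact more careful than the paper's in spelling out the inclusion $\{R_2\le r_2\}\subseteq H$ and the conditioning on $R_1=r_1$, but the argument is the same.
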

\begin{proof}
From Remark 2 and Fig. \ref{Fig:SystemModel}(a) (handoff scenario), it can be concurred that the distance $R_2$ between the new serving BS $x_2$ and the user at $\ell_2$ is greater than $z_1 = \max(0,r_1 - v)$. Also, it can not be farther than $r_{12}$ because otherwise it would not be closer than the serving BS $x_1$ at location 1. Hence $R_2 \in [\max(0,r_1-v),r_{12}]$. Conditioned on the occurence of handoff, the CDF of $R_2$ is thus given as:
\begin{align}
\notag \P(R_2 \leq r_2|H,r_1,\theta) & \stackrel{(a)}= \frac{1}{\P(H|r_1,\theta)}\big[1- \P(N\big(\mathbf{C}(\ell_1,r_1,\ell_2,r_{2},v)\big)=0)\big] \\
& = \frac{1}{\P(H|r_1,\theta)}\big[1- \exp(-\lambda \: |\mathbf{C}(\ell_1,r_1,\ell_2,r_{2},v)|)\big]
\end{align}
where (a) follows because conditioned on the presence of no BSs inside $\mathcal{C}_1$, the distribution of $R_2$ is dictated by the presence of no BSs in the region $\mathbf{C}(\ell_1,r_1,\ell_2,r_2,v) = |\mathcal{C}_2 \setminus \mathcal{C}_2 \cap \mathcal{C}_1|$ (same logic as Lemma \ref{Lemma 1}) and the final result follows by using $P(H|r_1,\theta)$ from Lemma \ref{Lemma 1}.
\end{proof}
\subsection{Channel Model}
We assume all BSs transmit with unit power and consider Rayleigh fading links with mean power gain of unity. We assume the fading gains across all links to be spatially and temporally independent and the fading coefficient between two nodes $x$ and $y$ at any time slot $k$ as $h_{xy}(k)$. We consider the following bounded path-loss function $g(x)$ for the large-scale fading, 
\begin{align}   \label{Eq:Path loss function}
g(x) = \frac{1}{\epsilon + \|x\|^{\alpha}} , \: \text{where} \: \epsilon > 0.   
\end{align} 
We next define the signal and interference experienced by a typical user at two spatial locations in two different time slots for the channel model described above. Say the typical user located at $\ell_1$ connects to its closest BS $x_1$ at time slot $t_1$. At a different time slot $t_2$, the user now located at $\ell_2$ connects to its closest BS $x_2$. Let $S(t_1,\ell_1)$ and $S(t_2,\ell_2)$ be the received powers from the serving BSs $x_1$ and $x_2$ at time slots $t_1$ and $t_2$  (user locations $\ell_1$ and $\ell_2$) respectively. Then 
\begin{align}
S(t_1,\ell_1) = h_{x_1\ell_1}(t_1)g(x_1-\ell_1) \label{Eq:RecPowloc1}\\ 
S(t_2,\ell_2) = h_{x_2\ell_2}(t_2)g(x_2-\ell_2). \label{Eq:RecPowloc2}
\end{align}
The interference power at time slots $t_1$ and $t_2$ with the user at locations $\ell_1$ and $\ell_2$ is denoted by $I(t_1,\ell_1)$ and $I(t_2,\ell_2)$ respectively, which are given as
\begin{align}
I(t_1,\ell_1) = \sum \limits_{x \in \Phi, x \neq x_1} h_{x\ell_1}(t_1) g(x-\ell_1) \label{Eq:Intloc1}\\
I(t_2,\ell_2) = \sum \limits_{x \in \Phi, x \neq x_2} h_{x\ell_2}(t_2) g(x-\ell_2). \label{Eq:Intloc2}
\end{align} 
The Signal-to-Interference Ratio ($\sir$) at time slot $t_1$ (user located at $\ell_1$) is denoted by $\sir(t_1,\ell_1)$ and is thus given as 
\begin{align} \label{Eq:SIR1}
\sir(t_1,\ell_1) = \frac{S(t_1,\ell_1)}{I(t_1,\ell_1)} =\frac{h_{x_1\ell_1}(t_1)g(x_1-\ell_1)}{\sum \limits_{x \in \Phi, x \neq x_1} h_{x\ell_1}(t_1) g(x-\ell_1)}.
\end{align}
Similarily, $\sir$ at time slot $t_2$ (user located at $\ell_2$) is given as
\begin{align} \label{Eq:SIR2}
\sir(t_2,\ell_2) = \frac{S(t_2,\ell_2)}{I(t_2,\ell_2)} =\frac{h_{x_2\ell_2}(t_2)g(x_2-\ell_2)}{\sum \limits_{x \in \Phi, x \neq x_2} h_{x\ell_2}(t_2) g(x-\ell_2) }.
\end{align}
For this system setup, we first study the interference correlation at two spatial locations $\ell_1$ and $\ell_2$ (time slots $t_1$ and $t_2$)  in terms of spatio-temporal correlation coefficient in Section \ref{sec:sec3} which is defined below:
\begin{align}
\zeta_I(\ell_1,\ell_2) = \frac{\E[I(t_1,\ell_1)I(t_2,\ell_2)] - \E[I(t_2,\ell_2)]^2}{\E[{I(t_2,\ell_2)^2}]-\E[{I(t_2,\ell_2)]^2}}. \label{Corr Coeff}
\end{align}
Then, we study correlation in link successes at the two user locations $\ell_1$ and $\ell_2$ in terms of joint coverage probability in Section \ref{sec:sec4}, which is formally defined next.
\begin{defn} (Joint Coverage Probability) \label{Def:def2} 
It is defined as the probability that the $\sir$ at user's spatial location $\ell_1$ (time slot $t_1$) and spatial location $\ell_2$ (time slot $t_2$) both exceed a certain SIR target (threshold) $T$. In this paper, it is denoted by ${\rm P_{c}}(\ell_1,\ell_2)$ and is mathematically defined as
\begin{align}
{\rm P_{c}}(\ell_1,\ell_2) = \P(\sir(t_1,\ell_1)>T,\sir(t_2,\ell_2)>T).
\end{align}
\end{defn}
\section{Spatio-temporal Interference Correlation} \label{sec:sec3}
The joint interference statistics at two spatial locations/time slots in a wireless network capture the effect of spatial/temporal correlation and is helpful in characterizing the network performance. However, deriving such joint statistics is usually not straightforward. The authors in \cite{JointInter} analyzed the joint temporal statistics of interference for a single-hop communication link by deriving the joint characteristic function of interference which follows a multivariate symmetric alpha
stable distribution. The expressions, though not obtained in closed form, facilitated the evaluation of different performance metrics such as  local delay, average network throughput and transmission capacity in the low-outage regime. In \cite{Gong}, the authors study outage correlation in mobile ad hoc networks and state that the direct evaluation of the joint distribution of correlated interference at two time slots is impractical and hence provide lower and upper bounds for the same. In this work, we derive the exact joint statistics of interference and coverage observed at two user locations. In this Section, we focus on the characterization of spatio-temporal interference correlation coefficient, whereas in Section Section \ref{sec:sec4} we will study joint coverage probability. 

In this work, we consider two spatial locations of the user at a distance $v$ apart in two different time slots, which w.l.o.g. are taken as time slot $1$ and $2$ i.e. $t_1 = 1$ and $t_2 = 2$. As shown in Fig. \ref{Fig:SystemModel}, the two user locations are taken to be $\ell_1 = (-v,0)$ and $\ell_2 = (0,0)$. We use the shorthand notation $I(1)$ and $I(2)$ to denote the interference at time slots $1$ and $2$ (spatial locations $\ell_1$ and $\ell_2$) and using (\ref{Eq:Intloc1}) and (\ref{Eq:Intloc2}), it is given as 
\begin{align}
 {I}(1) &= \sum \limits_{x \in \Phi, x \neq x_1} h_x(1) g({x}-v) \label{Eq:I1}
 \end{align}
 \begin{align}
{I}(2) &= \sum \limits_{x \in \Phi, x \neq x_2} h_x(2) g(x), \label{Eq:I2}
\end{align} 
where $h_x(1)$ and $h_x(2)$ denote the fading coefficients between a node $x \in \Phi$ and the user at locations $\ell_1$ and $\ell_2$ respectively.
Again for simplicity, a shorthand notation $S(1)$ and $S(2)$ is used to denote the received power from the serving BSs $x_1$ and $x_2$ at time slots $1$ and $2$ respectively. From  (\ref{Eq:RecPowloc1}) and (\ref{Eq:RecPowloc2}), $S(1) = h_{x_1}(1)g(x_1-v)$ and $S(2) = h_{x_2}(2)g(x_2)$.  Let the total received power (signal plus interference) at time slots $1$ and $2$ be denoted by ${I_{\rm t}(1)}$ and  ${I_{\rm t}(2)}$ and is  given as
\begin{align}
{I_{\rm t}(1)} &= \sum \limits_{x \in \Phi} h_x(1) g({x-v) = I(1)} + h_{x_1}(1) g(x_1-v)  \\ 
{I_{\rm t}(2)} &= \sum \limits_{x \in \Phi} h_x(2) g(x) = I(2) + h_{x_2}(2) g(x_2). \label{Adhoc Int}
\end{align}
The expression of total received powers  $I_{\rm t}(1)$ and $I_{\rm t}(2)$ for our setup is the same as the interference experienced by a typical user in an ad-hoc network with unit random access channel probability. The mean, second moment and first order cross moment of interference in such an ad-hoc network at time slots $1$ and $2$ is derived in \cite{Ganti} and will be useful in our analysis. The expressions are given below.
\begin{align}
& \E[I_{\rm t}(1)] = \E[I_{\rm t}(2)] = \lambda \int_{\R^2} g(x) dx \label{It1}\\
 &\E[{I_{\rm t}(1)}^2] = \E[{I_{\rm t}(2)}^2] = E[h^2] \lambda \int_{\R^2} g^2(x) dx +  \lambda^2 \left(\int_{\R^2} g(x) dx\right)^2 \label{It11}\\
&\E[{I_{\rm t}(1)}{I_{\rm t}(2)}] = \lambda \int\limits_{\R^2} g(x-v)g(x) dx + \lambda^2 \left(\int\limits_{\R^2} g(x) dx\right)^2 \label{It12}
\end{align}
We perform a similar analysis to determine the spatio-temporal correlation of interference in a cellular network with closest-BS association policy. By definition (from Equation (\ref{Corr Coeff})), the spatio-temporal interference correlation coefficient for a typical user at spatial locations $\ell_1$ and $\ell_2$ (at time slots $1$ and $2$) is hence given as
\begin{align}
\zeta_I(\ell_1,\ell_2) = \frac{\E[I(1)I(2)] - \E[I(2)]^2}{\E[{I(2)^2}]-\E[{I(2)]^2}}. \label{Corr Coeff new}
\end{align}
The mean, second moment and first order cross moment of interference for the typical user is derived next to evaluate the expression for spatio-temporal correlation coefficient given by (\ref{Corr Coeff new}).
The mean of interference at time slot $2$ is 
\begin{align}
 \notag \E[I(2)] & \stackrel{(a)}=  \E[I_{\rm t}(2)-h_{x_2}(2)g(x_2)] \\
& \stackrel{(b)}= \E[I_{\rm t}(2)]-\E[g(x_2)],  \label{Mean Int}
\end{align}
where (a) follows from the definition of $I_{\rm t}(2)$ in (\ref{Adhoc Int}) and (b) results as $\E[h] = 1$.
From (\ref{Mean Int}),
\begin{align}
\notag  \E[I(2)]^2 &= \E[I_{\rm t}(2)]^2+ \E[g(x_2)]^2 - 2\E[I_{\rm t}(2)]\E[g(x_2)] \\
& \stackrel{(c)}= \E[I_{\rm t}(2)]^2+ \E[g(x_2)]^2 -2\lambda \E[g(x_2)\int_{\R^2} g(x) dx ], \label{Mean Int Square}
\end{align}
where (c) follows by using (\ref{It1}) and the property of expectations i.e. $\E[cX] = c\E[X]$ where $c$ is a constant.

The second moment of the interference can be now computed as 
\begin{align}
\E[I(2)^2] &  \notag =  \E[(I_{\rm t}(2)-h_{x_2}(2)g(x_2))^2] \\\notag
 & = \E[{I_{\rm t}(2)}^2] + \E[h^2]\E[g^2(x_2)]-2\E[h_{x_2}(2)g(x_2)I_{\rm t}(2)] \\\notag
 & = \E[{I_{\rm t}(2)}^2]+\E[h^2]\E[g^2(x_2)]  -2\E[h_{x_2}(2)g(x_2)(h_{x_2}(2)g(x_2)+ {I}(2))] \\\notag
& \stackrel{(d)}= \E[{I_{\rm t}(2)}^2]-\E[h^2]\E[g^2({x_2})] - 2\E\big[h_{x_2}(2)g(x_2)\sum \limits_{x \in \Phi, x \neq x_2} h_x(2) g(x)\big] \\
& \stackrel{(e)}= \E[{I_{\rm t}(2)}^2]-\E[h^2]\E[g^2({x_2})] - 2\lambda\E\big[g(x_2)\int_{\R^2 \backslash \mathcal{C}_2}^{\infty} g(x) \ud x\big], \label{I11}
\end{align}
where (d) is obtained by using the expression of $I(2)$ in (\ref{Eq:I2}) and (e) follows from Campbell's law and the spatial independence of fading links. Here $\mathcal{C}_2$ denotes the circle centered at $l_2$ i.e. origin and radius $r_2$ as shown in Fig. \ref{Fig:SystemModel}.

Now, in order to evaluate correlation coefficient defined by (\ref{Corr Coeff new}), we are left to evaluate $E[I(1)I(2)]$, which we do next.  
\begin{align}
\notag & \E[I(1)I(2)]  = \E[(I_{\rm t}(1)-h_{x_1}(1)g(x_1-v))(I_{\rm t}(2)-h_{x_2}(2)g(x_2))] \\\notag
& = \E[I_{\rm t}(1)I_{\rm t}(2)]+\E[h_{x_1}(1)h_{x_1}(2)g(x_1-v)g(x_2)] -\E[h_{x_1}(1)g(x_1-v)I_{\rm t}(2)]-\E[h_{x_2}(2)g(x_2)I_{\rm t}(1)] \\\notag
& = \E[I_{\rm t}(1)I_{\rm t}(2)]+\E[g(x_1-v)g(x_2)]- \E[h_{x_1}(1)g(x_1-v)(h_{x_2}(2)g(x_2)+ {I}(2))] \\\notag & \qquad \qquad - \E[h_{x_2}(2)g(x_2)(h_{x_1}(1)g(x_1-v)+ {I}(1))]\\
& = \E[I_{\rm t}(1)I_{\rm t}(2)]-\E[g(x_1-v)g(x_2)]-\underbrace{\E[h_{x_1}(1)g(x_1-v)I(2)]}_{T_1}-\underbrace{\E[h_{x_2}(2)g(x_2)I(1)]}_{T_2} \label{I1I2}
\end{align}
The expressions of $T_1$ and $T_2$ are further simplified by proceeding as below.
\begin{align}
\notag T_1 &= \E[h_{x_1}(1)g(x_1-v)I(2)] \\\notag
&\stackrel{(f)}= \E[h_{x_1}(1)g(x_1-v)I(2),\bar{H}]+ \E[h_{x_1}(1)g(x_1-v)I(2),H] \\\notag
&\stackrel{(g)}=\E\big[h_{x_1}(1)g(x_1-v)\sum \limits_{x \in \Phi, x \neq \{x_1,x_2\}} h_x(2) g(x),\bar{H}\big]+ \\\notag & \qquad \qquad \E\big[h_{x_1}(1)g(x_1-v)\big(h_{x_1}(2)g(x_1) + \sum \limits_{x \in \Phi, x \neq \{x_1,x_2\}} h_x(2) g(x)\big),H\big] \\\notag
& = \E\big[g(x_1-v)\sum \limits_{x \in \Phi, x \neq \{x_1,x_2\}} g(x)\big]+ \E[g(x_1-v)g(x_1),H] \\
& \stackrel{(h)}= \lambda\E\bigg[g(x_1-v)\int_{\R^2 \setminus (\mathcal{C}_1 \cup \mathcal{C}_2)}g(x) \ud x\bigg] +\E[g(x_1-v)g(x_1),H], \label{T1}
\end{align}
where (f) follows by splitting the expectation into two possible scenarios (no handoff and handoff). This step is taken to consider the interference in the second time slot $I(2)$ appropriately. In case of handoff, interference from  BS $x_1$ also needs to be considered at $\ell_2$ whereas in case of no handoff, the BS $x_1$ continues to be the serving BS at location 2 and hence should not be considered as a part of $I(2)$. Step (g) follows from the above argument and considering the interference from $x_1$ only in the handoff scenario. Step (h) follows by applying Campbell's law and observing that interference excluding BSs $x_1$ and $x_2$ is equivalent to considering interference outside $\mathcal{C}_1$ and $\mathcal{C}_2$, where $\mathcal{C}_1$ and $\mathcal{C}_2$ are as shown in Fig. \ref{Fig:SystemModel}.

Proceeding similar to $T_1$, we obtain 
\begin{align}
T_2 = \lambda\E\bigg[g(x_2)\int_{\R^2 \setminus (\mathcal{C}_1 \cup \mathcal{C}_2)}g(x-v) \ud x\bigg]+\E[g(x_2)g(x_2-v),H]. \label{T2}
\end{align}

Now substituting various moment expressions given by (\ref{Mean Int Square}), (\ref{I11}), and (\ref{I1I2}) in (\ref{Corr Coeff new}), we get a general expression for the spatio-temporal correlation coefficient as a function of $v$. While it is not straightforward to gain analytical insights from the final expression (given its complexity), we will revisit this general case in the Numerical Results section (Section \ref{Sec:sec5}). In the rest of this Section, we focus on the more tractable case of $v=0$, which corresponds to the static user, i.e., $l_1 = l_2$. We will mainly study the effect of BS density $\lambda$ on the resulting temporal interference correlation coefficient $\zeta_I(\ell_1,\ell_1)$, whose expression is given next in Theorem \ref{Thm:Theorem 1}.
\begin{theorem} \label{Thm:Theorem 1}
The temporal interference correlation coefficient ($v = 0$) of the typical user at time slots $1$ and $2$, where the path loss function $g(x)$ is given by (\ref{Eq:Path loss function}) is
\begin{align}
\zeta_I(\ell_1,\ell_1) = \frac{\lambda\int\limits_{\R^2}g^2(x) \ud x - a(x_2)+2\lambda\E\big[g(x_2)\int\limits_{B(0,r_2)} g(x) \ud x\big]}{\E[h^2]\lambda\int\limits_{\R^2}g^2(x) \ud x - b(x_2) +2\lambda\E\big[g(x_2)\int\limits_{B(0,r_2)} g(x) \ud x\big]},
\end{align}
where $a(x_2) = \E[g^2(x_2)]+\E[g(x_2)]^2$, $b(x_2) = \E[h^2]\E[g^2(x_2)]+\E[g(x_2)]^2$,  $r_2 = \|x_2\|$ and $f_{R_2}(r_2) = 2\lambda \pi r_2\exp(-\lambda \pi r_2^2)$.
\end{theorem}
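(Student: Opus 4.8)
The plan is to specialize the general second-order quantities already assembled in this section, namely (\ref{Mean Int Square}), (\ref{I11}), (\ref{I1I2}), (\ref{T1}) and (\ref{T2}), to the degenerate geometry $v=0$, where $\ell_1=\ell_2$ and hence almost all of the handoff machinery trivializes, and then plug them into the definition (\ref{Corr Coeff new}). First I would record the consequences of $v=0$. Since the two locations coincide, the closest BS is the same at both, so $x_1=x_2$; substituting $v=0$ and $r_{12}=r_1$ into Lemma~\ref{Lemma 1} gives $|\mathbf{C}(\ell_1,r_1,\ell_2,r_1,0)|=0$, so $\P(H\mid r_1,\theta)=0$ for every $r_1,\theta$. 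Consequently $\mathcal{C}_1=\mathcal{C}_2=B(0,r_2)$, $g(x-v)=g(x)$, and $R_2=R_1$, so $R_2$ inherits the nearest-neighbour density $f_{R_2}(r_2)=2\lambda\pi r_2 e^{-\lambda\pi r_2^2}$ asserted in the statement, with $r_2=\|x_2\|$.

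With these substitutions I would reduce each ingredient. From (\ref{It12}) at $v=0$, $\E[I_{\rm t}(1)I_{\rm t}(2)]=\lambda\int_{\R^2}g^2(x)\ud x+\lambda^2\big(\int_{\R^2}g(x)\ud x\big)^2$. In (\ref{T1}) and (\ref{T2}) the handoff terms $\E[g(x_1-v)g(x_1),H]$ and $\E[g(x_2)g(x_2-v),H]$ vanish because $\P(H)=0$, and $\mathcal{C}_1\cup\mathcal{C}_2$ collapses to $B(0,r_2)$, so $T_1=T_2=\lambda\,\E\big[g(x_2)\int_{\R^2\setminus B(0,r_2)}g(x)\ud x\big]$; in (\ref{I1I2}), $\E[g(x_1-v)g(x_2)]=\E[g^2(x_2)]$ since $x_1=x_2$. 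Assembling the numerator $\E[I(1)I(2)]-\E[I(2)]^2$ from (\ref{I1I2}) and (\ref{Mean Int Square}), the terms $\lambda^2\big(\int_{\R^2}g(x)\ud x\big)^2$ and $2\lambda\,\E[g(x_2)]\int_{\R^2}g(x)\ud x$ cancel between the two; writing $\int_{\R^2\setminus B(0,r_2)}=\int_{\R^2}-\int_{B(0,r_2)}$ converts the leftover of $-T_1-T_2$ into $+2\lambda\,\E\big[g(x_2)\int_{B(0,r_2)}g(x)\ud x\big]$, leaving precisely $\lambda\int_{\R^2}g^2(x)\ud x-\E[g^2(x_2)]-\E[g(x_2)]^2+2\lambda\,\E\big[g(x_2)\int_{B(0,r_2)}g(x)\ud x\big]$, i.e. the claimed numerator with $a(x_2)=\E[g^2(x_2)]+\E[g(x_2)]^2$.

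For the denominator I would run the same reduction on $\E[I(2)^2]-\E[I(2)]^2$ using (\ref{I11}), (\ref{It11}) and (\ref{Mean Int Square}): once more $\lambda^2\big(\int_{\R^2}g(x)\ud x\big)^2$ and $2\lambda\,\E[g(x_2)]\int_{\R^2}g(x)\ud x$ drop out, and $-2\lambda\,\E\big[g(x_2)\int_{\R^2\setminus B(0,r_2)}g(x)\ud x\big]$ becomes $+2\lambda\,\E\big[g(x_2)\int_{B(0,r_2)}g(x)\ud x\big]$, yielding $\E[h^2]\lambda\int_{\R^2}g^2(x)\ud x-\E[h^2]\E[g^2(x_2)]-\E[g(x_2)]^2+2\lambda\,\E\big[g(x_2)\int_{B(0,r_2)}g(x)\ud x\big]$, the stated denominator with $b(x_2)=\E[h^2]\E[g^2(x_2)]+\E[g(x_2)]^2$. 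Dividing the two gives the theorem.

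The only genuinely structural point is that at $v=0$ the ``handoff'' branch of $T_1$ and $T_2$ is identically zero, so the earlier ``no-handoff'' bookkeeping is the whole story; everything else is routine algebra. The part I would be most careful with is exactly that algebra: tracking the shared constant $\lambda^2\big(\int_{\R^2}g(x)\ud x\big)^2$, using that $\lambda\,\E[g(x_2)]\int_{\R^2}g(x)\ud x=\lambda\,\E\big[g(x_2)\int_{\R^2}g(x)\ud x\big]$ because the inner integral is deterministic, and keeping the sign change straight when passing from integration over $\R^2\setminus B(0,r_2)$ to integration over $B(0,r_2)$, which is what turns the two ``$-2\lambda$'' cross terms into the ``$+2\lambda$'' terms that survive in the statement.
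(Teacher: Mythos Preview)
Your proposal is correct and follows essentially the same route as the paper: specialize to $v=0$ so that $x_1=x_2$, $\P(H)=0$, and $\mathcal{C}_1=\mathcal{C}_2=B(0,r_2)$, reduce $T_1=T_2$ accordingly, and substitute into (\ref{Corr Coeff new}). You are in fact more explicit than the paper about the algebraic cancellations (the $\lambda^2\big(\int g\big)^2$ terms and the conversion $\int_{\R^2\setminus B(0,r_2)}\to\int_{B(0,r_2)}$), which the paper leaves to the reader.
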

\begin{proof}
For a static user ($v =0$), the serving BS in time slots $1$ and $2$ are the same i.e. $x_1 = x_2$ and is simply the closest BS to the user's location. As a result, there is no handoff to a different serving BS and therefore $\P(H|r_1,\theta) = 0$. The distance distribution of this serving BS in time slots $1$ and $2$ is therefore given by the null probability of PPP $\Phi$ as $f_{R_1}(r_1) = 2 \lambda \pi r_1 e^{-\lambda \pi r_1^2}$ and $f_{R_2}(r_2) = 2\lambda \pi r_2 e^{-\lambda \pi r_2^2}$. From (\ref{T1}),
\begin{align}
T_1 \stackrel{(a)}=  \lambda\E\big[g(x_1)\int_{\R^2 \setminus (\mathcal{C}_1 \cup \mathcal{C}_2)}g(x) \ud x\big] \stackrel{(b)}=  \lambda\E\big[g(x_2)\int_{\R^2 \setminus \mathcal{C}_2}g(x) \ud x\big],
\end{align}
where (a) follows because the second term in (\ref{T1}) goes to zero (no handoff for static user ($v = 0$)) and (b) follows as $x_1 = x_2$ and $\mathcal{C}_1 = \mathcal{C}_2$. Using the same argument in (\ref{T2}), we obtain
\begin{align}
T_2 =  \lambda\E\big[g(x_2)\int_{\R^2 \setminus \mathcal{C}_2}g(x) \ud x\big].
\end{align}
Therefore we obtain $T_1 = T_2$ and substituting their expression in (\ref{I1I2}), we get
\begin{align}
\notag \E[I(1)I(2)] &= \E[I_{\rm t}(1)I_{\rm t}(2)]-\E[g(x_1)g(x_2)]-2\lambda\E\big[g(x_2)\int_{\R^2 \setminus \mathcal{C}_2}g(x) \ud x\big] \\
&= \E[I_{\rm t}(1)I_{\rm t}(2)]-\E[g^2(x_2)]-2\lambda\E\big[g(x_2)\int_{\R^2 \setminus \mathcal{C}_2}g(x) \ud x\big] \label{cross moment}
\end{align}
Substituting the expressions of mean, second moment and first order cross moment of interference from (\ref{Mean Int}), (\ref{I11}) and (\ref{cross moment}) in the definition of correlation coefficient in (\ref{Corr Coeff new}), we obtain the final result.
\end{proof}
For this static user scenario, we now provide asymptotic results on the effect of BS density $\lambda$ on the temporal interference correlation coefficient $\zeta_I(\ell_1,\ell_1)$. 
\begin{corollary} $ \lim\limits_{\lambda \to 0} \: \: \zeta_I(\ell_1,\ell_1) = \frac{1}{\E[h^2]} $ and $ \lim\limits_{\lambda \to \infty} \: \: \zeta_I(\ell_1,\ell_1) = \frac{1}{\E[h^2]}$. \label{Corollary 1}
\begin{proof}
As stated before, the distance of the serving BS for the typical user (static) in both time slots is Rayleigh distributed with its distribution given as $f_{R_2}(r_2) = 2\lambda \pi r_2\exp(-\lambda \pi r_2^2)$. Hence, the serving (closest) BS is located at a mean distance $\E[R_2] = \overbar{R_2} = 0.5/\sqrt{\lambda}$. 

For $\lambda \to \infty$, we have $\overbar{R_2} \to 0$. This asserts that in a  highly dense network, the serving BS is located very close to the typical  user (origin) and hence the integral in the expression of $\zeta_I(\ell_1,\ell_1)$ in Theorem \ref{Thm:Theorem 1} vanishes to zero. Therefore, 
\begin{align}
\lim\limits_{\lambda \to \infty} \zeta_I(\ell_1,\ell_1) = \lim\limits_{\lambda \to \infty} \: \: \frac{\lambda\int\limits_{\R^2}g^2(x) \ud x}{\E[h^2]\lambda\int\limits_{\R^2}g^2(x) \ud x} = \frac{1}{\E[h^2]}.
\end{align}
Similarly for $\lambda \to 0$, we have $\overbar{R_2} \to \infty$ i.e. the closest BS $x_2$ (or $x_1$) is located very far away from the typical user in a sparsely dense network.
\begin{align}
\lim\limits_{\lambda \to 0} \zeta_I(\ell_1,\ell_1) = \lim\limits_{\lambda \to 0} \: \: \frac{\E[g^2(x_1)]+\E[g(x_1)]^2}{\E[h^2]\E[g^2(x_1)]+\E[g(x_1)]^2} \stackrel{(a)} = \frac{1}{\E[h^2]},
\end{align}
where (a) follows because $\E[g(x_1)]^2 \ll \E[g^2(x_1)]$ due to i) Jensen's inequality and ii) the monotonically decreasing behaviour of $g(x)$.
\end{proof}
\end{corollary}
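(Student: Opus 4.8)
The plan is to read the driving terms directly off the closed form in Theorem~\ref{Thm:Theorem 1}. Write its numerator and denominator as $N(\lambda)=P(\lambda)+Q(\lambda)$ and $D(\lambda)=\E[h^2]\,P(\lambda)+Q(\lambda)$, where $P(\lambda)=\lambda\int_{\R^2}g^2(x)\ud x-\E[g^2(x_2)]$ is the ``ad hoc'' second-moment piece and $Q(\lambda)=-\E[g(x_2)]^2+2\lambda\,\E\big[g(x_2)\int_{B(0,r_2)}g(x)\ud x\big]$ is the remainder. A one-line manipulation yields the exact identity
\begin{align}
\zeta_I(\ell_1,\ell_1)=\frac{1}{\E[h^2]}+\frac{(\E[h^2]-1)\,Q(\lambda)}{\E[h^2]\,D(\lambda)},
\end{align}
and since $D(\lambda)=\V(I(2))>0$, both claims reduce to proving $Q(\lambda)/D(\lambda)\to 0$.

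The key step I would take is to simplify $P$ and $Q$ using the serving-distance law $f_{R_2}(r)=2\lambda\pi r\,e^{-\lambda\pi r^2}$ together with $\frac{\ud}{\ud r}\int_{B(0,r)}g(x)\ud x=2\pi r\,g(r)$. Writing $A(r):=\int_{B(0,r)}g(x)\ud x$ and integrating by parts in $r$ (the boundary terms vanish because $e^{-\lambda\pi r^2}\to0$), I expect $\E[g(x_2)]=\lambda\,\E[A(R_2)]$, $\E\big[g(x_2)\int_{B(0,r_2)}g(x)\ud x\big]=\tfrac{\lambda}{2}\E[A(R_2)^2]$ and $\E[g^2(x_2)]=\lambda\,\E\big[\int_{B(0,R_2)}g^2(x)\ud x\big]$. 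Substituting, the messy expressions collapse to
\begin{align}
Q(\lambda)=\lambda^2\,\V\!\big(A(R_2)\big)\ \ge\ 0,\qquad
P(\lambda)=\lambda\,\E\!\Big[\int_{\R^2\setminus B(0,R_2)}g^2(x)\ud x\Big]\ \ge\ 0.
\end{align}

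With these forms the limits are routine. For $\lambda\to\infty$, $R_2\to0$ in probability, so $\E\big[\int_{B(0,R_2)}g^2\ud x\big]\to0$ and $P(\lambda)\sim\lambda\int_{\R^2}g^2\ud x\to\infty$; meanwhile $A(R_2)\le\pi R_2^2/\epsilon$ and $\E[R_2^4]=2/(\lambda\pi)^2$ give $Q(\lambda)\le\lambda^2\E[A(R_2)^2]\le2/\epsilon^2$, which is bounded, so $Q/D\to0$. This is the rigorous form of the $\overbar{R_2}\to0$ heuristic in the statement. For $\lambda\to0$, $R_2\to\infty$ in probability, so $A(R_2)=\int_{B(0,R_2)}g\ud x$ converges boundedly to the constant $\int_{\R^2}g\ud x$, whence $\V(A(R_2))\to0$ and $Q(\lambda)=o(\lambda^2)$; on the other hand, writing $P(\lambda)=\int_0^\infty g^2(r)\,2\lambda\pi r\big(1-e^{-\lambda\pi r^2}\big)\ud r$ and sandwiching $1-e^{-\lambda\pi r^2}$ between $\lambda\pi r^2 e^{-\lambda\pi r^2}$ and $\lambda\pi r^2$ shows $P(\lambda)=\Theta(\lambda^2)$ (this uses $\int_{\R^2}g^2(x)\ud x<\infty$ and $\int_{\R^2}\|x\|^2 g^2(x)\ud x<\infty$, both guaranteed by $\alpha>2$). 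Hence $D(\lambda)=\E[h^2]P(\lambda)+Q(\lambda)=\Theta(\lambda^2)$ while $Q(\lambda)=o(\lambda^2)$, so again $Q/D\to0$ and $\zeta_I(\ell_1,\ell_1)\to1/\E[h^2]$.

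The hard part is the $\lambda\to0$ direction, and the subtlety there is that one cannot simply drop terms: $\E[g^2(x_2)]$ is of the same order $\Theta(\lambda)$ as $\lambda\int_{\R^2}g^2\ud x$, and $\E[g(x_2)]^2$ is of the same order $\Theta(\lambda^2)$ as $2\lambda\E\big[g(x_2)\int_{B(0,r_2)}g\ud x\big]$, so there are two layers of exact cancellation to track. The integration-by-parts step is exactly what makes them visible — it recasts $Q$ as a genuine variance that vanishes faster than $\lambda^2$ (because $\int_{B(0,R_2)}g\ud x$ tends to the deterministic limit $\int_{\R^2}g\ud x$) and recasts $P$ as a tail integral that is plainly $\Theta(\lambda^2)$. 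A reader who would rather not invoke this identity can instead Taylor-expand $e^{-\lambda\pi r^2}$ inside each expectation and verify the cancellations by hand; this is elementary but longer. For Rayleigh fading $\E[h^2]=2$, so the common limiting value is $1/2$.
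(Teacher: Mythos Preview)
Your proof is correct and substantially more rigorous than the paper's. The paper argues heuristically: for $\lambda\to\infty$ it observes $\overbar{R_2}\to0$ and drops the integral term (and implicitly the bounded $a(x_2),b(x_2)$), leaving the ratio $\lambda\int g^2\big/\big(\E[h^2]\lambda\int g^2\big)$; for $\lambda\to0$ it drops the terms carrying an explicit factor of $\lambda$, takes the ratio $a(x_2)/b(x_2)$, and then invokes Jensen to discard $\E[g(x_2)]^2$ against $\E[g^2(x_2)]$. Your route is different in kind: you extract the exact identity $\zeta_I=1/\E[h^2]+(\E[h^2]-1)Q/(\E[h^2]D)$, then use the integration-by-parts identities $\E[g(R_2)]=\lambda\,\E[A(R_2)]$, $\E[g(R_2)A(R_2)]=\tfrac{\lambda}{2}\E[A(R_2)^2]$, $\E[g^2(R_2)]=\lambda\,\E\big[\int_{B(0,R_2)}g^2\big]$ to recast $Q$ as $\lambda^2\,\V\big(A(R_2)\big)$ and $P$ as a tail second-moment integral, after which both limits follow by bounded/dominated convergence and the sandwich $\lambda\pi r^2 e^{-\lambda\pi r^2}\le 1-e^{-\lambda\pi r^2}\le\lambda\pi r^2$. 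What this buys you is exactly the point you flag at the end: in the $\lambda\to0$ direction the paper's term-dropping is not justified as written, because $\E[g^2(x_2)]\sim\lambda\int_{\R^2}g^2$ and $\E[g(x_2)]^2\sim\lambda^2\big(\int_{\R^2}g\big)^2$ are of the \emph{same} orders as the terms being discarded, and the true numerator and denominator live at order $\lambda^2$ only after two exact cancellations. Your variance representation makes those cancellations explicit and reduces the limit to a clean $o(\lambda^2)/\Theta(\lambda^2)$ statement. The paper's argument is shorter and conveys the right intuition, but yours is the one that actually closes the gap.
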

The above result gives insights on the temporal interference correlation in cellular networks under closest BS-association policy for two extreme cases of BS density. The result for the asymptotic cases is the same as an ad-hoc network scenario ($\zeta_I = 1/\E[h^2]$). The spatio-temporal interference correlation coefficient for an ALOHA ad-hoc network is derived in \cite{Ganti} and stated below for ALOHA parameter $p =1$ .
\begin{align}
\zeta_{I}^{\tt ad}(\ell_1,\ell_2) = \frac{\E[{I_{\rm t}(1)}{I_{\rm t}(2)}] - \E[I_{\rm t}(2)]^2}{\E[{I_{\rm t}(2)^2}]-\E[{I_{\rm t}(2)]^2}} = \frac{\int_{\R^2}g(x)g(x-v) \ud x}{\E[h^2]\int_{\R^2}g^2(x) \ud x}
\end{align} 
As noted earlier, we will revisit the general case of $v>0$ as a part of numerical results in Section \ref{Sec:sec5}, where we will  provide further insights by comparing the spatio-temporal interference correlation coefficient for an ad hoc network $\zeta_{I}^{\tt ad}(\ell_1,\ell_2)$ and cellular network $\zeta_{I}(\ell_1,\ell_2)$.
\section{Joint Coverage Probability} \label{sec:sec4}
As studied in Section \ref{sec:sec3}, there is correlation in the interference powers among different spatial locations of the user in a cellular network. Consider a \emph{static user} scenario where a typical user is static at a given spatial location in the network for multiple time slots and connects to its closest BS (serving BS) in each time slot. Due to the temporal correlation of interference, it is seen that if the user is in outage ($1-$coverage) of the serving BS in a given time slot, there is a higher probability that the user will be also be in outage in the future time slots \cite{DiversityPoly}. From this arises the need for correlation-aware retransmission schemes where the BSs do not re-transmit (or remains silent) for certain time slots if an outage is encountered. A suitable metric which measures the correlation in coverage (or outage) in different time slots (or spatial locations) is the \emph{joint coverage probability}, which is defined formally in Definition \ref{Def:def2}.

In cellular networks, as discussed in Section \ref{sec2}, there can either be a handoff to a new serving BS or no handoff as a typical user moves from one location to another. The joint coverage probability of a typical user in the two spatial locations hence depends on the two handoff scenarios. Although handoffs in cellular networks are critical in providing a user with the best serving BS at any given spatial location, excessive handoffs can also result in overheads and handoff delays \cite{Handover}. This is a more pertinent issue in ultra-dense networks \cite{Ultradense}, where a large density of BSs may result in unnecessary handoffs i.e. a handoff to a closer BS even though continued connection to the previous serving BS meet the QoS requirements. Hence, for completeness, we study \emph{handoff skipping} \cite{HandoffSkipping}, where a user skips certain handoffs and remain connected to the same serving BS after moving a certain distance. In this section, we first study the joint coverage probability of a typical user with handoff skipping and then move on to a more \emph{conventional handoff scenario}, where handoffs occur as soon as a user is closer to a new BS. In contrast to prior works which just study joint coverage probability for extreme cases of correlation ($v=0$ and $v\rightarrow \infty$, which respectively correspond to the static and {\em highly} mobile user scenarios), we derive new analytical results for the joint coverage probability for the more relevant case of finite mobility, where $0\leq v < \infty$.
\begin{figure}[t!]
\centering{
\includegraphics[width=.50\linewidth]{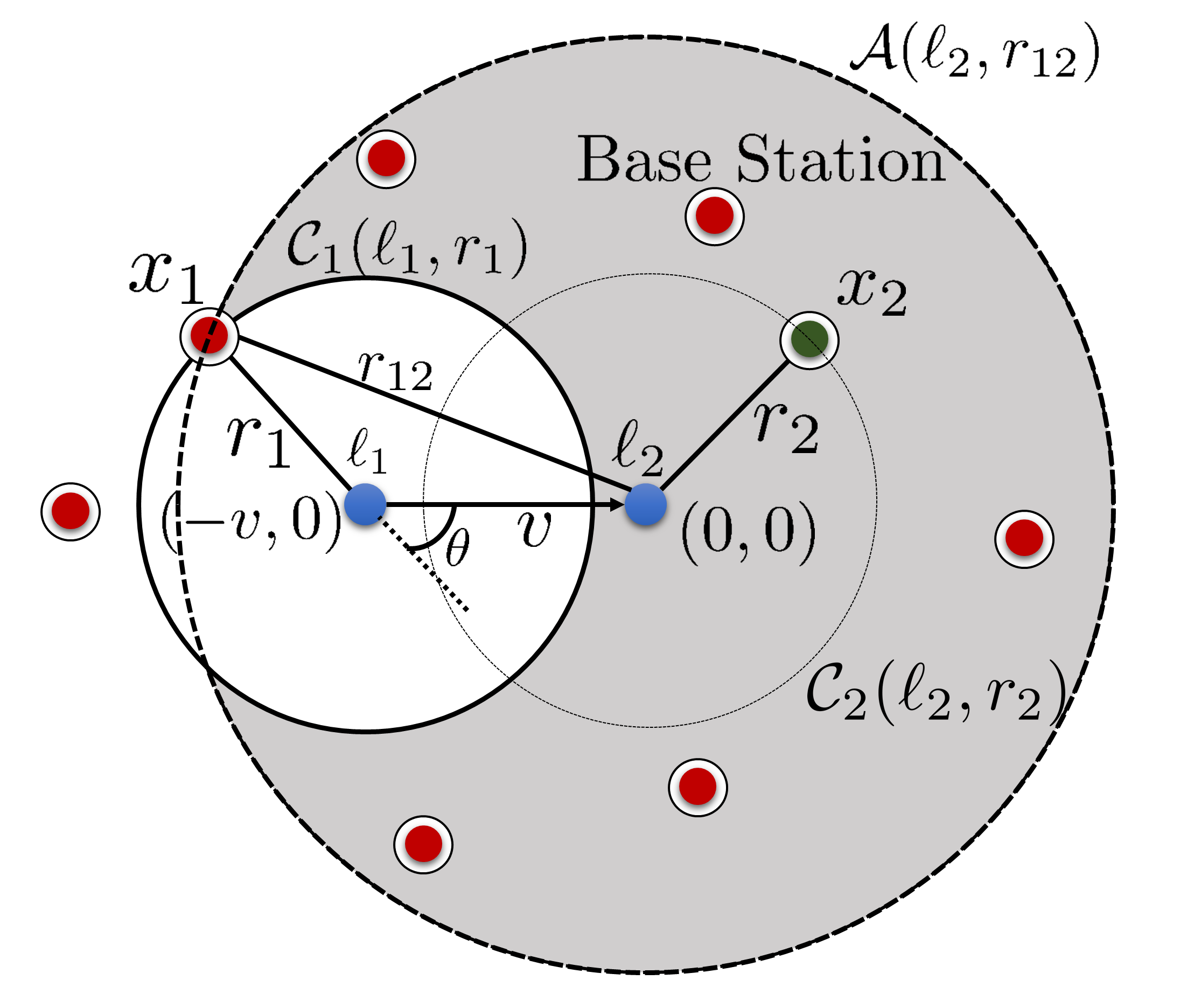}
\caption{System Model when a typical user (denoted by \emph{blue dot}) moves from $\ell_1$ to $\ell_2$ for different handoff strategies. i) When \emph{handoff skipping} is used, the user skips handoffs to closer BSs and remain connected to the BS $x_1$ (serving BS at $\ell_1$) at $\ell_2$. ii) For \emph{conventional handoffs}, the user at $\ell_2$ is handed off to the closest BS $x_2$ (shown in green) at $\ell_2$.}
\label{Fig:HandoffSkipping}
}
\end{figure}
\subsection{Joint Coverage Probability With Handoff Skipping}
Fig. \ref{Fig:HandoffSkipping} depicts handoff skipping scenario where user remains connected to BS $x_1$ after moving from $\ell_1$ to $\ell_2$ even in the presence of other closer BSs (i.e. it skips handoff to those BSs). In case of conventional handoffs, the user would have been handed off to the closest BS $x_2$ at $\ell_2$. In this section, we first derive the joint coverage probability  for the handoff skipping scenario in Theorem \ref{Thm:Theorem 2}.
\begin{theorem} \label{Thm:Theorem 2}
When handoff skipping is used, the joint coverage probability of a typical user at two locations $\ell_1$ and $\ell_2$ separated by a distance $v$ as shown in Fig. \ref{Fig:HandoffSkipping} in a PPP of BS density $\lambda$ is given as:
\begin{align}
\notag \P(\sir_1 > T, \sir_2 > T) &= \E_{R_1,\Theta,\Gamma}\bigg[\exp\bigg(-2\pi\lambda \int_{z_1}^{\infty} F_1(r_1,r,\gamma,\theta) \: r \ud r \bigg ) \\ & \qquad \exp\bigg(2\lambda \int_{|v-r_1|}^{v+r_1} \cos^{-1}\big(\frac{r^2+v^2-r_1^2}{2rv}\big)F_1(r_1,r,\gamma,\theta) \: r\ud r  \bigg)\bigg],
\end{align}
where  $F_1(r_1,r,\gamma,\theta) = 1 - \frac{1}{(1+Tr_1^{\alpha}(r^2+v^2-2rvcos\gamma )^{-\alpha/2})(1+Tr_{12}^{\alpha}r^{-\alpha})}$, $r_{12} = \sqrt{r_1^2+v^2+2r_1v\cos \theta}$ and $z_1 =  {\tt max}(0,r_1-v)$.
\begin{proof}
Appendix \ref{Appendix A}.
\end{proof}
\end{theorem}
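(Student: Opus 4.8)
The plan is to exploit the defining feature of handoff skipping --- that the \emph{same} base station $x_1$ (the one closest to $\ell_1$) serves the user at both $\ell_1$ and $\ell_2$ --- so that conditioning on the point process $\Phi$ decouples the two coverage events. Writing $\sir_i = h_{x_1}(i)\,g(x_1-\ell_i)/I(i)$ with $I(i) = \sum_{x\in\Phi,\,x\neq x_1} h_x(i)\,g(x-\ell_i)$ for $i=1,2$, I would first condition on $\Phi$. Since $h_{x_1}(1)$ and $h_{x_1}(2)$ are independent unit-mean exponentials and the fading families at the two time slots are mutually independent, $\{\sir_1>T\}$ and $\{\sir_2>T\}$ are conditionally independent given $\Phi$, and (using $h_{x_1}(i)\sim\exp(1)$ together with the Laplace functional of a sum of independent exponentials) $\P(\sir_i>T\mid\Phi) = \E[\exp(-T I(i)/g(x_1-\ell_i))\mid\Phi] = \prod_{x\in\Phi,\,x\neq x_1}\big(1+T\,g(x-\ell_i)/g(x_1-\ell_i)\big)^{-1}$. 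Multiplying the two products and substituting the path-loss model with $\|x-\ell_2\|=r$, $\|x-\ell_1\|=\sqrt{r^2+v^2-2rv\cos\gamma}$ ($\gamma$ the angle at $\ell_2$ between $x$ and $\ell_1$), $\|x_1-\ell_1\|=r_1$ and $\|x_1-\ell_2\|=r_{12}$, the contribution of each interferer $x$ is exactly $1-F_1(r_1,r,\gamma,\theta)$. Note that $I(2)$ here is the interference from \emph{all} of $\Phi$ except $x_1$: because handoffs are skipped, there is no exclusion disk centered at $\ell_2$ --- only the one centered at $\ell_1$ that comes from the nearest-BS conditioning.

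The second step is to take the expectation over $\Phi$. Conditioning on $R_1=r_1$ and $\Theta=\theta$ --- with $R_1$ having density $2\pi\lambda r_1 e^{-\pi\lambda r_1^2}$ and $\Theta$ uniform on $[0,\pi]$ --- the interferers $\Phi\setminus\{x_1\}$ form a PPP of intensity $\lambda$ on $\R^2\setminus\mathcal{C}_1$, where $\mathcal{C}_1=\mathcal{C}_1(\ell_1,r_1)$ is the disk of radius $r_1$ about $\ell_1$ (no BS is closer to $\ell_1$ than $x_1$). Applying the probability generating functional of this PPP to $1-F_1$ gives $\E\big[\prod_{x\in\Phi,\,x\neq x_1}(1-F_1)\mid r_1,\theta\big] = \exp\!\big(-\lambda\int_{\R^2\setminus\mathcal{C}_1}F_1(x)\,\ud x\big)$; it then remains to evaluate this integral in polar coordinates centered at $\ell_2$ and to average over $R_1$, $\Theta$ and the auxiliary angle $\Gamma$ that carries out the angular integration of $F_1$.

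That geometric reduction of $\int_{\R^2\setminus\mathcal{C}_1}F_1\,\ud x$ to the displayed radial form is the step I expect to be the main obstacle, since it requires describing $\R^2\setminus\mathcal{C}_1$ \emph{as seen from} $\ell_2$. Splitting $\int_{\R^2\setminus\mathcal{C}_1} = \int_{\R^2}-\int_{\mathcal{C}_1}$, the first piece gives $2\pi\lambda\int_0^\infty F_1\,r\,\ud r$ (the angular integral of $F_1$ being absorbed into the uniform auxiliary angle $\Gamma$), while the second is handled via the dichotomy of Remark~2: when $v<r_1$ the circle of radius $r$ about $\ell_2$ lies entirely inside $\mathcal{C}_1$ for $r<r_1-v$ --- hence the surviving radial integral starts at $z_1=\max(0,r_1-v)$ --- and meets $\partial\mathcal{C}_1$ in an arc of half-angle $\cos^{-1}\!\big(\tfrac{r^2+v^2-r_1^2}{2rv}\big)$ for $r\in[\,|v-r_1|,\,v+r_1]$, whereas for $v\ge r_1$ only the partial-arc regime occurs and $z_1=0$. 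Collecting terms yields $-\lambda\int_{\R^2\setminus\mathcal{C}_1}F_1\,\ud x = -2\pi\lambda\int_{z_1}^{\infty}F_1\,r\,\ud r + 2\lambda\int_{|v-r_1|}^{v+r_1}\cos^{-1}\!\big(\tfrac{r^2+v^2-r_1^2}{2rv}\big)F_1\,r\,\ud r$, and exponentiating and taking $\E_{R_1,\Theta,\Gamma}$ delivers the claimed formula. The bookkeeping between the two scenarios --- presence versus absence of the exclusion disk $\mathcal{C}_3(\ell_2,z_1)$ --- and the circular-segment angle are exactly the careful treatment of the neighborhood of the two locations emphasized in the introduction; everything else is a routine PGFL computation.
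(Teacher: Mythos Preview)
Your proposal is correct and follows essentially the same approach as the paper's proof in Appendix~\ref{Appendix A}: condition on $\Phi$ (equivalently on $R_1,\Theta$), use independent exponential fading to factor the two coverage events into a product over $\Phi\setminus\{x_1\}$, apply the PGFL over $\R^2\setminus\mathcal{C}_1$, and then reduce the planar integral to the displayed radial form via polar coordinates at $\ell_2$. The only cosmetic difference is that the paper writes $\mathcal{C}_1=\mathcal{C}_3\cup(\mathcal{C}_1\setminus\mathcal{C}_3)$ with $\mathcal{C}_3=B(\ell_2,z_1)$ and splits $\int_{\R^2\setminus\mathcal{C}_1}=\int_{\R^2\setminus\mathcal{C}_3}-\int_{\mathcal{C}_1\setminus\mathcal{C}_3}$ directly, whereas you split $\int_{\R^2\setminus\mathcal{C}_1}=\int_{\R^2}-\int_{\mathcal{C}_1}$ and then cancel the full-arc part $r<z_1$; both routes produce the same two radial integrals with lower limits $z_1$ and $|v-r_1|$.
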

Having obtained the expression of joint coverage probability with handoff skipping, we now move on to study the joint coverage under conventional handoffs. More insights on the joint coverage probability will be provided through numerical results in Section \ref{Sec:sec5}.
\subsection{Joint Coverage Probability With Conventional Handoffs}
In this subsection, we derive the joint coverage probability in two spatial locations of a typical user with conventional handoffs. Considering the two scenarios possible when a user moves a distance $v$ at an angle $\theta$ from $\ell_1$ to $\ell_2$ (as shown in Fig. \ref{Fig:SystemModel}), we derive the joint coverage probability for both scenarios individually (no handoff and handoff) to obtain the total joint coverage probability. From total probability theorem, the joint coverage probability ${\rm P_{c}}(\ell_1,\ell_2)$ at two spatial locations $\ell_1$ and $\ell_2$ is given as
\begin{align}
\P(\sir_1 > T, \sir_2 > T) = \P(\sir_1 > T, \sir_2 > T,\bar{H}) + \P(\sir_1 > T, \sir_2 > T,H). \label{Eq:Joint Cov Prob}
\end{align} 
We first derive the expression of the first term i.e. the joint coverage probability under no handoff scenario in Theorem \ref{Thm:Theorem 3} with its proof given in Appendix \ref{Appendix B}.
\begin{theorem} \label{Thm:Theorem 3}
Under a no handoff scenario (Fig. \ref{Fig:SystemModel} (b)), the joint coverage probability of a typical user at the two spatial locations $\ell_1$ and $\ell_2$  in a PPP of BS density $\lambda$  (i.e. same serving BS at $\ell_1$ and $\ell_2$) is given as
\begin{align}
\notag \P(\sir_1 > T, \sir_2 > T,\bar{H}) &= \E_{R_1,\Theta,\Gamma}\bigg[\P(\bar{H}|r_1,\theta)\exp\big(-2\pi\lambda \int_{r_{12}}^{\infty} F_1(r_1,r,\gamma,\theta) \: r \ud r \big ) \\ & \qquad  \exp\big(2\lambda \int_{r_{12}}^{v+r_1} \cos^{-1}\big(\frac{r^2+v^2-r_1^2}{2rv}\big)F_1(r_1,r,\gamma,\theta) \: r\ud r  \big)\bigg], \label{JointCovNoHandoff}
\end{align}
where $F_1(r_1,r,\gamma,\theta)$ and $r_{12}$ are defined in Theorem \ref{Thm:Theorem 2} and $\P(\bar{H}|r_1,\theta) = 1 -\P(H|r_1,\theta)$ is given by (\ref{Eq:Prob handoff}).
\end{theorem}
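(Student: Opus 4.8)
The plan is to condition on the serving-BS geometry, reduce the joint coverage event to Laplace transforms of interference (as is standard under Rayleigh fading), and then apply the PGFL of the PPP on the appropriate BS-free region. Fixing $R_1=r_1$ and $\Theta=\theta$, under $\bar H$ the base station $x_1$ serves the user at \emph{both} locations, at distance $r_1$ from $\ell_1$ and $r_{12}=\sqrt{r_1^2+v^2+2r_1v\cos\theta}$ from $\ell_2$. Writing $\P(\sir_1>T,\sir_2>T,\bar H)=\E_{R_1,\Theta}\big[\P(\bar H\mid r_1,\theta)\,\P(\sir_1>T,\sir_2>T\mid \bar H,r_1,\theta)\big]$ peels off the factor $\P(\bar H\mid r_1,\theta)$, which is already available from Lemma \ref{Lemma 1}, and leaves the conditional joint coverage to be evaluated.

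For that conditional probability, I would use that $\sir_1>T$ is equivalent to $h_{x_1}(1)g(x_1-\ell_1)>T\,I(1)$ with $h_{x_1}(1)\sim\mathrm{Exp}(1)$, so conditioned on $\Phi$ and on every other fading gain it holds with probability $\exp\!\big(-T r_1^{\alpha}\sum_{x\in\Phi\setminus\{x_1\}}h_x(1)g(x-\ell_1)\big)$; symmetrically $\sir_2>T$ contributes $\exp\!\big(-T r_{12}^{\alpha}\sum_{x\in\Phi\setminus\{x_1\}}h_x(2)g(x-\ell_2)\big)$, where I use $g(y)=\|y\|^{-\alpha}$. Temporal independence of $h_{x_1}(1)$ and $h_{x_1}(2)$ makes the two factors multiply, and averaging over the i.i.d.\ unit-mean exponential interferer gains $h_x(1),h_x(2)$ collapses the sums into $\prod_{x\in\Phi\setminus\{x_1\}}\big[(1+T r_1^{\alpha}\|x-\ell_1\|^{-\alpha})(1+T r_{12}^{\alpha}\|x\|^{-\alpha})\big]^{-1}=\prod_{x}\big(1-F_1(r_1,r,\gamma,\theta)\big)$, where $(r,\gamma)$ are the polar coordinates of $x$ about $\ell_2$. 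Finally, given $\bar H$, $r_1$ and $\theta$, the interferers form a PPP of intensity $\lambda$ on $\R^2\setminus(\mathcal{C}_1\cup\mathcal{A})$: the closest-BS constraint at $\ell_1$ forbids BSs in $\mathcal{C}_1$, and $\bar H$ forbids BSs in $\mathcal{A}$. The PGFL then gives $\exp\!\big(-\lambda\int_{\R^2\setminus(\mathcal{C}_1\cup\mathcal{A})}F_1\,\mathrm{d}x\big)$.

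The main obstacle is evaluating this integral over the non-convex region $\R^2\setminus(\mathcal{C}_1\cup\mathcal{A})$. I would split it as $\int_{\R^2\setminus\mathcal{A}}F_1\,\mathrm{d}x-\int_{\mathcal{C}_1\setminus\mathcal{A}}F_1\,\mathrm{d}x$. Since $\mathcal{A}$ is the disk of radius $r_{12}$ about $\ell_2$, the first piece is $\int_{r_{12}}^{\infty}\!\int_0^{2\pi}F_1\,r\,\mathrm{d}\gamma\,\mathrm{d}r$. For the second, one checks $r_{12}\in[\,|v-r_1|,v+r_1\,]$, so the part of $\mathcal{C}_1$ lying outside $\mathcal{A}$ sweeps radii $r\in[r_{12},v+r_1]$, and at radius $r$ the arc of $\mathcal{C}_1$ has half-angle $\cos^{-1}\!\big(\tfrac{r^2+v^2-r_1^2}{2rv}\big)$ by the law of cosines, giving $\int_{r_{12}}^{v+r_1}2\cos^{-1}\!\big(\tfrac{r^2+v^2-r_1^2}{2rv}\big)F_1\,r\,\mathrm{d}r$ once the interferer angle is carried by the uniform auxiliary variable $\Gamma$. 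Reassembling the two pieces inside the exponential and taking the outer expectation over $R_1,\Theta,\Gamma$ yields the stated formula. The delicate points are verifying the radial limits (in particular $r_{12}\ge|v-r_1|$, so the inner lower limit is $r_{12}$ and not $|v-r_1|$) and handling the angular coordinate $\gamma$ of the interferers consistently, i.e., making precise the role of $\Gamma$ in representing the angular integration.
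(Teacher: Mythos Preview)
Your proposal is correct and follows essentially the same route as the paper's proof in Appendix~\ref{Appendix B}: condition on $(R_1,\Theta)$, peel off $\P(\bar H\mid r_1,\theta)$, use Rayleigh fading and independence of the $h_x(k)$ to reduce the conditional joint coverage to a product over interferers, apply the PGFL on $\R^2\setminus(\mathcal{C}_1\cup\mathcal{A})$, and split the resulting integral as $\int_{\R^2\setminus\mathcal{A}}-\int_{\mathcal{C}_1\setminus\mathcal{A}}$ with the arc half-angle computed via the law of cosines. The only cosmetic difference is that the paper writes the exclusion region as $\mathcal{C}_1\cup\mathcal{C}_2$ (under $\bar H$ one has $r_2=r_{12}$, so $\mathcal{C}_2=\mathcal{A}$), and your flagged ``delicate points'' about the lower radial limit $r_{12}$ and the role of $\Gamma$ are exactly the places where the paper's derivation is terse.
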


We now derive the joint coverage probability under the handoff scenario in Theorem \ref{Thm:Theorem 4} with its proof given in Appendix \ref{Appendix C}. 
\begin{theorem} \label{Thm:Theorem 4}
The joint coverage probability of a typical user at two spatial locations $\ell_1$ and $\ell_2$  in a PPP of BS density $\lambda$ under a handoff scenario (Fig. \ref{Fig:SystemModel}(a)) i.e. different serving BS at both locations (BS $x_1$ at $\ell_1$ and BS $x_2$ at $\ell_2$) is given as
\begin{align}
\notag \P(\sir_1 > T, \sir_2 > T,H) &= \E_{R_1,R_2,\Theta,\Gamma}\bigg[\P(H|r_1,\theta)\frac{1}{1+Tr_1^{\alpha}{\|{x_2-v}\|}^{-\alpha}}\frac{1}{1+Tr_2^{\alpha}{\|{x_1}\|}^{-\alpha}} \\ & \: \: \exp\big(-2\pi\lambda \int_{r_{2}}^{\infty} F_2(r_1,r_2,r,\gamma) \: r \ud r \big)\exp\big(\lambda \mathcal{B}_1(r_1,r_2,\gamma)  \big)\bigg],
\end{align}
where $F_2(r_1,r_2,r,\gamma) = 1 - \frac{1}{(1+Tr_1^{\alpha}(r^2+v^2-2rvcos\gamma )^{-\alpha/2})(1+Tr_{2}^{\alpha}r^{-\alpha})}$ and $\mathcal{B}_1(r_1,r_2,\gamma)$ given by (\ref{B1Integral}).
\end{theorem}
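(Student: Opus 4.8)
The plan is to condition on the full geometry of both serving links and on the handoff event, reduce the joint coverage probability to a product of two Laplace‑transform expectations over the interference field, and then decondition using Lemmas \ref{Lemma 1} and \ref{Lemma 2}. Throughout I use $\sir_1>T\Leftrightarrow h_{x_1}(1)>Tr_1^{\alpha}I(1)$ and $\sir_2>T\Leftrightarrow h_{x_2}(2)>Tr_2^{\alpha}I(2)$.

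First I would condition on $R_1=r_1$, $\Theta=\theta$, the event $H$, the serving distance $R_2=r_2$ at $\ell_2$, and the angular coordinate $\Gamma=\gamma$ of $x_2$ seen from $\ell_2$. On $H$, the serving BS at $\ell_2$ is a point $x_2\neq x_1$ lying outside $\mathcal{C}_1$ with $r_2\in[z_1,r_{12}]$ (Lemma \ref{Lemma 2}), and $\Gamma$ is conditionally uniform on the arc of $\partial\mathcal{C}_2$ that lies outside $\mathcal{C}_1$; this arc geometry, together with the lune geometry below, is what produces the $\cos^{-1}$ factors in the statement. Since the fading is independent across the two time slots, $h_{x_1}(1)$ and $I(1)$ are measurable with respect to slot‑$1$ fading while $h_{x_2}(2)$ and $I(2)$ are measurable with respect to slot‑$2$ fading, so the two SIR events are \emph{conditionally independent given $\Phi$}. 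Integrating out the exponential desired‑link gains turns $\P(\sir_1>T\mid\Phi)$ into $\E[e^{-Tr_1^{\alpha}I(1)}\mid\Phi]$ and $\P(\sir_2>T\mid\Phi)$ into $\E[e^{-Tr_2^{\alpha}I(2)}\mid\Phi]$, each factorizing over $\Phi$ via the Rayleigh Laplace transform into products of $(1+Tr_1^{\alpha}\|x-\ell_1\|^{-\alpha})^{-1}$, resp.\ $(1+Tr_2^{\alpha}\|x-\ell_2\|^{-\alpha})^{-1}$.

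Next I would split $\Phi$ into three classes: $x_1$, which appears only in $I(2)$ (it is the excluded serving BS at $\ell_1$) and contributes $(1+Tr_2^{\alpha}\|x_1\|^{-\alpha})^{-1}$; $x_2$, which appears only in $I(1)$ and contributes $(1+Tr_1^{\alpha}\|x_2-v\|^{-\alpha})^{-1}$; and each other $x\in\Phi\setminus\{x_1,x_2\}$, which contributes the combined factor $[(1+Tr_1^{\alpha}\|x-\ell_1\|^{-\alpha})(1+Tr_2^{\alpha}\|x-\ell_2\|^{-\alpha})]^{-1}$. Because $x_1$ is closest to $\ell_1$ and $x_2$ is closest to $\ell_2$, the region $\mathcal{C}_1\cup\mathcal{C}_2$ contains no other BS, so by Slivnyak's theorem / the Palm characterization of the PPP the remaining points form a PPP of intensity $\lambda$ on $\R^2\setminus(\mathcal{C}_1\cup\mathcal{C}_2)$. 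Applying the PGFL to the combined factor gives $\exp\!\big(-\lambda\int_{\R^2\setminus(\mathcal{C}_1\cup\mathcal{C}_2)}\big(1-[(1+Tr_1^{\alpha}\|x-\ell_1\|^{-\alpha})(1+Tr_2^{\alpha}\|x-\ell_2\|^{-\alpha})]^{-1}\big)\,\ud x\big)$. Writing $\R^2\setminus(\mathcal{C}_1\cup\mathcal{C}_2)=(\R^2\setminus\mathcal{C}_2)\setminus(\mathcal{C}_1\setminus\mathcal{C}_2)$ and passing to polar coordinates centred at $\ell_2$ (so that $\|x-\ell_1\|^2=r^2+v^2-2rv\cos\gamma$), the first piece produces the $\int_{r_2}^{\infty}F_2(r_1,r_2,r,\gamma)\,r\,\ud r$ term and the lune $\mathcal{C}_1\setminus\mathcal{C}_2$ produces the additive correction $\lambda\,\mathcal{B}_1(r_1,r_2,\gamma)$, whose radial and angular limits are dictated by the disjoint / intersecting / engulfed cases of Remark \ref{Remark 1}. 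Deconditioning with $f_{R_1}$, $f_{\Theta}$, $\P(H\mid r_1,\theta)$ from Lemma \ref{Lemma 1}, the conditional law of $R_2$ from Lemma \ref{Lemma 2}, and the law of $\Gamma$ yields the claimed expression.

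The main obstacle I anticipate is geometric bookkeeping rather than any probabilistic subtlety: one must correctly identify $\mathcal{C}_1\cup\mathcal{C}_2$ as the exclusion region (and check $x_1\notin\mathcal{C}_2$, $x_2\notin\mathcal{C}_1$ so that this conditioning is consistent with the handoff constraints), then carve the PGFL integral cleanly into the $\R^2\setminus\mathcal{C}_2$ part and the lune, with the lune's limits of integration changing across the three configurations of Remark \ref{Remark 1}. A secondary care point is keeping the two interference sums honest about membership — $x_1$ is absent from $I(1)$ but present in $I(2)$, and vice versa for $x_2$ — since an error there would drop or double the two explicit prefactors in the statement.
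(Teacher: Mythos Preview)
Your proposal is correct and follows essentially the same route as the paper: condition on $(r_1,\theta,H,r_2)$, integrate out the two exponential desired-link gains, separate the contributions of $x_1$ and $x_2$ (each interfering at only one of the two locations) from the bulk $\Phi\setminus\{x_1,x_2\}$, apply the PGFL over $\R^2\setminus(\mathcal{C}_1\cup\mathcal{C}_2)$, and split that region as $(\R^2\setminus\mathcal{C}_2)\setminus(\mathcal{C}_1\setminus\mathcal{C}_2)$ to isolate the lune correction $\mathcal{B}_1$. One small notational point: in the paper $\Gamma$ is used as the polar angle of the \emph{generic interferer} in the PGFL integral (so that $\|x-\ell_1\|^2=r^2+v^2-2rv\cos\gamma$), not specifically the angle of $x_2$; your plan already uses $\gamma$ in both roles, so just be sure to keep them distinguished when you write out the details.
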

Having obtained the joint coverage probability in two user locations $\ell_1$ and $\ell_2$ separated by any distance $v$, we now study the joint coverage for two extreme cases: i) static user ($v =0$) and ii) highly mobile user ($v \rightarrow \infty$). The results are provided below.
\begin{corollary} (Static user)
The joint coverage probability for a static user ${\rm P_{c}}(\ell_1,\ell_1)$ i.e. when a typical user remains at the same spatial location ($\ell_1 = \ell_2$) for 2 different time slots is given as:
\begin{align}
{\rm P_{c}}(\ell_1,\ell_1) = \frac{1}{\,_2F_1\bigg(2,-\frac{2}{\alpha};1-\frac{2}{\alpha};-T\bigg)},
\end{align}
where $\,_2F_1(a,b;c;z)$ is the Gauss hypergeometric function, $\alpha > 2$ is the path loss exponent and $T$ is the $\sir$ threshold.
\begin{proof}
As the user is static ($v = 0$) during both time slots, there is no handoff to a different serving BS at time slot $2$, i.e., the user remains connected to the same BS it was connected in time slot $1$. This can also be verified from Lemma \ref{Lemma 1} that $\P(H|r_1,\theta) = 0$ for $v = 0$. As there is no handoff, the joint coverage probability under handoff scenario is zero i.e. $\P(\sir_1 > T, \sir_2 >T, H) = 0$ (from Theorem \ref{Thm:Theorem 3}). Therefore from (\ref{Eq:Joint Cov Prob}), the joint coverage probability for a static user, ${\rm P_{c}}(\ell_1,\ell_1)$ 
\begin{align}
& = \P(\sir_1 > T, \sir_2 >T, \bar{H}) \\\notag
&\stackrel{(a)}= \E_{R_1,\Theta,\Gamma}\bigg[\exp\big(-2\pi\lambda \int\limits_{r_{1}}^{\infty} F(r_1,r,\gamma,\theta) \: r \ud r \big )\exp\big(2\lambda \int\limits_{r_{1}}^{r_1} \cos^{-1}\big(\frac{r^2+v^2-r_1^2}{2rv}\big)F(r_1,r,\gamma,\theta) \: r\ud r  \big)\bigg] \\\notag
&\stackrel{(b)}= \E_{R_1}\bigg[\exp\big(-2\pi\lambda \int\limits_{r_{1}}^{\infty} \left(1 - \frac{1}{(1+Tr_1^{\alpha}r^{-\alpha})^2}\right) r\ud r\bigg]
\end{align}
where (a) follows from (\ref{JointCovNoHandoff}) and using $r_{12} = r_1$ and $P(\bar{H}|r_1,\theta) = 1$, and (b) results by using the definition of $F(r_1,r,\gamma,\theta)$ from Theorem \ref{Thm:Theorem 2} and substituting $v = 0$. The final result follows by deconditioning w.r.t. $r_1$ using $f_{R_1}(r_1) = 2\lambda \pi r_1 e^{-\lambda \pi r_1^2}$, some algebraic manipulations and the definition of Gauss hypergeometric function.
\end{proof}
\end{corollary}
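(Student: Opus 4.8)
The plan is to specialize the conventional-handoff analysis of Section~\ref{sec:sec4} to $v = 0$ and watch it collapse. First I would observe that for a static user the two locations coincide, so no handoff is ever possible: Lemma~\ref{Lemma 1} gives $\P(H|r_1,\theta) = 0$ when $v = 0$. Since the integrand in Theorem~\ref{Thm:Theorem 4} carries the factor $\P(H|r_1,\theta)$, the handoff contribution to the total-probability split (\ref{Eq:Joint Cov Prob}) vanishes, leaving ${\rm P_{c}}(\ell_1,\ell_1) = \P(\sir_1 > T, \sir_2 > T, \bar{H})$, which is exactly the quantity evaluated in Theorem~\ref{Thm:Theorem 3}.

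Next I would plug $v = 0$ into (\ref{JointCovNoHandoff}). Then $r_{12} = \sqrt{r_1^2} = r_1$, $z_1 = \max(0,r_1) = r_1$, and $\P(\bar{H}|r_1,\theta) = 1$; the second exponential integrates over the empty interval $[r_{12}, v+r_1] = [r_1,r_1]$ and hence equals $1$; and in $F_1$ (defined in Theorem~\ref{Thm:Theorem 2}) the base $r^2+v^2-2rv\cos\gamma$ becomes $r^2$ and $r_{12}$ becomes $r_1$, so $F_1$ reduces to $1 - (1+Tr_1^{\alpha}r^{-\alpha})^{-2}$ with no residual dependence on $\Theta$ or $\Gamma$. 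What survives is the single Rayleigh expectation
\[
{\rm P_{c}}(\ell_1,\ell_1) = \E_{R_1}\!\left[\exp\!\left(-2\pi\lambda \int_{r_1}^{\infty}\!\left(1-\frac{1}{(1+Tr_1^{\alpha} r^{-\alpha})^{2}}\right) r\,\ud r\right)\right].
\]

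The rest is a routine stochastic-geometry computation. The scaling $r = r_1\sqrt{s}$ brings the exponent to $-\pi\lambda\, r_1^2\, J$, where $J := \int_1^\infty\big(1-(1+Ts^{-\alpha/2})^{-2}\big)\,\ud s$ is free of $r_1$ and $\lambda$; deconditioning against $f_{R_1}(r_1) = 2\pi\lambda r_1 e^{-\pi\lambda r_1^2}$ then reduces to the elementary Gaussian integral $\int_0^\infty 2\pi\lambda r_1\, e^{-\pi\lambda r_1^2(1+J)}\,\ud r_1 = 1/(1+J)$, giving ${\rm P_{c}}(\ell_1,\ell_1) = 1/(1+J)$ --- in particular independent of $\lambda$. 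Finally I would match $1+J$ to the Gauss hypergeometric function: expanding $(1+Ts^{-\alpha/2})^{-2} = \sum_{k\ge 0}(k+1)(-T)^k s^{-2k/\alpha}$ and integrating termwise gives $J = \sum_{k\ge 1}(k+1)\frac{-\mu}{k-\mu}(-T)^k$ with $\mu = 2/\alpha$, and since $(2)_k/k! = k+1$ and $(-\mu)_k/(1-\mu)_k = -\mu/(k-\mu)$ for $k\ge1$, the series $1+J$ is precisely the hypergeometric series of $\,_2F_1(2,-2/\alpha;1-2/\alpha;-T)$. A convenient cross-check: the classical PPP coverage formula is $\P(\sir>T) = 1/\,_2F_1(1,-2/\alpha;1-2/\alpha;-T)$, obtained by the identical computation with $(1+x)^{-1}$ in place of $(1+x)^{-2}$, and replacing $(1+x)^{-1}$ by $(1+x)^{-2}$ turns $(1)_k = k!$ into $(2)_k = (k+1)!$, i.e.\ raises the first parameter from $1$ to $2$.

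I expect this last identification to be the only real obstacle. The termwise integration above is legitimate only for $|T| < 1$ (the binomial series converges for $s > T^{2/\alpha}$), so to cover all thresholds one either invokes analytic continuation --- the usual reading of $\,_2F_1$ for $|z|\ge 1$ --- or substitutes $u = s^{-\alpha/2}$ to recast $J = \frac{2}{\alpha}\int_0^1(1-(1+Tu)^{-2})u^{-2/\alpha-1}\,\ud u$ and evaluates this Beta-type integral in terms of $\,_2F_1$, rearranging to the stated form with contiguous relations. This is also where the hypothesis $\alpha > 2$ is essential: it is precisely the condition for $J$ --- equivalently for the aggregate-interference integral $\int_{r_1}^{\infty}(\cdots)\,r\,\ud r$ --- to converge at infinity, the familiar path-loss constraint for PPP interference fields. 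Everything preceding this step is bookkeeping once Theorems~\ref{Thm:Theorem 2}--\ref{Thm:Theorem 4} are available.
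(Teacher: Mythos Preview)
Your proposal is correct and follows essentially the same route as the paper: reduce to the no-handoff term via $\P(H|r_1,\theta)=0$, specialize Theorem~\ref{Thm:Theorem 3} at $v=0$ so that $r_{12}=r_1$, the second exponential becomes trivial, and $F_1$ collapses to $1-(1+Tr_1^\alpha r^{-\alpha})^{-2}$, then decondition over the Rayleigh $R_1$. The only difference is cosmetic: the paper compresses the final identification into ``algebraic manipulations and the definition of Gauss hypergeometric function,'' whereas you spell out the substitution $r=r_1\sqrt{s}$, the Gaussian integral yielding $1/(1+J)$, and the termwise match of $1+J$ with the $\,_2F_1$ series (including the convergence caveat and the role of $\alpha>2$).
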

\begin{corollary} (Highly mobile user)
The joint coverage probability for a highly mobile user ($v \to \infty$) i.e. when user moves a large distance between $\ell_1$ and $\ell_2$ is given as
\begin{align}
\lim_{v \to \infty} {\rm P_{c}}(\ell_1,\ell_2) = \bigg(\frac{1}{1+\rho(T,\alpha)}\bigg)^2,
\end{align}
where $\rho(T,\alpha) = T^{2/\alpha} \int_{T^{-2/\alpha}}^{\infty} \frac{\ud u}{1+u^{\alpha/2}} $.
\begin{proof}
For a highly mobile user ($v \to \infty$), there is always handoff as the user moves a large distance from $\ell_1$ to $\ell_2$ i.e. $P(H|r_1,\theta) = 1$ from (\ref{Eq:Prob handoff}). Also, $F_{R_2}(r_2) = 1- e^{-\lambda \pi r_2^2}$ from Lemma \ref{Lemma 2} as $|\mathbf{C}(\ell_1,r_1,\ell_2,r_{2},v)| = \pi r_2^2$ using Definition \ref{Definition 1} ($v \to \infty$ correspond to disjoint circle case as per Remark \ref{Remark 1}). Using the above expressions in (\ref{Eq:InfiniteMobility}), we obtain $\lim_{v \to \infty} \P(\sir_1 > T, \sir_2 > T,\bar{H}) $
\begin{align}
\notag &= \lim_{v \to \infty} \E_{R_1,R_2}\bigg[ \exp\big(-T{r_1}^{\alpha} \sum_{x \in \Phi \backslash \{x_1\}}h_{x}(1) {\|{x-v}\|}^{-\alpha}\big) \exp\big(-T{r_{2}}^{\alpha} \sum_{x \in \Phi \backslash \{x_2\}}h_{x}(2) {\|{x}\|}^{-\alpha}\big)\bigg] \\\notag
& \stackrel{(a)}= \E_{R_1}\bigg[ \exp\big(-T{r_1}^{\alpha} \sum_{y \in \Phi' \backslash \{x_1\}}h_{y}(1) {\|{y}\|}^{-\alpha}\big)\bigg] \E_{R_2} \bigg[ \exp\big(-T{r_{2}}^{\alpha} \sum_{x \in \Phi \backslash \{x_2\}}h_{x}(2) {\|{x}\|}^{-\alpha}\big)\bigg] \\\notag
& \stackrel{(b)}= \E_{R_1}\bigg[\exp\bigg( -\lambda \int\limits_{r_1}^{\infty}  \bigg(1 - \frac{1}{1+Tr_{1}^{\alpha}{u}^{-\alpha}}\bigg) u \ud u \bigg)\bigg] \E_{R_2}\bigg[\exp\bigg( -\lambda \int\limits_{r_2}^{\infty}  \bigg(1 - \frac{1}{1+Tr_{2}^{\alpha}{v}^{-\alpha}}\bigg) v \ud v \bigg)\bigg] 
\end{align}
where (a) follows from the fact that under $v \rightarrow \infty$, two different instances of the point process are observed at the two locations, which allows us to distribute the expectation across the two terms. For notational simplicity, we denote the {\em translated} PPP as $\Phi' = \{x-v\}$.  Step (b) follows as  $h_y(1) \sim \exp(1)$, $h_x(2) \sim \exp(1)$ and using the PGFL of PPP $\Phi$ and $\Phi'$. The final result follows by deconditioning w.r.t. $R_1$ and $R_2$ after some change of variables and algebraic manipulations.
\end{proof}
\end{corollary}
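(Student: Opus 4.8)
The plan is to reduce the $v\to\infty$ joint coverage probability to a product of two single-location coverage probabilities, using that (i) the marginal $\P(\sir_i>T)$ does not depend on $v$ — it is the coverage probability of a typical user under closest-BS association, equal to $(1+\rho(T,\alpha))^{-1}$ — and (ii) as $v\to\infty$ the events $\{\sir_1>T\}$ and $\{\sir_2>T\}$ decouple, because the portions of $\Phi$ that determine them are pushed infinitely far apart. Granting (ii),
\begin{align}
\lim_{v\to\infty}{\rm P_{c}}(\ell_1,\ell_2)=\lim_{v\to\infty}\P(\sir_1>T,\sir_2>T)=\P(\sir_1>T)\,\P(\sir_2>T)=\big(1+\rho(T,\alpha)\big)^{-2}.
\end{align}
This is consistent with Theorems~\ref{Thm:Theorem 3}--\ref{Thm:Theorem 4}: by Lemma~\ref{Lemma 1}, $\P(H|r_1,\theta)\to1$ (since $r_{12}\to\infty$, so $|\mathbf{C}(\ell_1,r_1,\ell_2,r_{12},v)|\to\infty$), the no-handoff term of Theorem~\ref{Thm:Theorem 3} vanishes, the conditional law of $R_2$ in Lemma~\ref{Lemma 2} converges to Rayleigh (the disjoint-circle regime of Remark~\ref{Remark 1}, where $|\mathbf{C}(\ell_1,r_1,\ell_2,r_2,v)|\to\pi r_2^2$), and the two ``cross'' factors $(1+Tr_1^\alpha\|x_2-v\|^{-\alpha})^{-1}$, $(1+Tr_2^\alpha\|x_1\|^{-\alpha})^{-1}$ in Theorem~\ref{Thm:Theorem 4} tend to $1$ because $\|x_1\|,\|x_2-v\|\to\infty$; what remains is the product of the two per-location coverage factors.

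To establish (ii) I would argue at the level of the point process rather than push the limit through the $v$-dependent integrals. Because the fading is Rayleigh and $h_{x_1}(1),h_{x_2}(2)$ are independent of everything else (and $\{h_x(1)\}$ is independent of $\{h_x(2)\}$), integrating out the fading gives, conditionally on $\Phi$,
\begin{align}
\P(\sir_1>T,\sir_2>T\mid\Phi)=f_1(\Phi)\,f_2(\Phi),\qquad f_i(\Phi)=\prod_{x\in\Phi\setminus\{x_i\}}\frac{1}{1+Tr_i^\alpha\|x-\ell_i\|^{-\alpha}},
\end{align}
where $x_i$ is the closest point of $\Phi$ to $\ell_i$ and $r_i=\|x_i-\ell_i\|$ (this holds whether or not a handoff occurs). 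Thus the only residual coupling between the two coverage events is the common point set $\Phi$. Since $\alpha>2$, $g$ is integrable, so the interference at $\ell_i$ from $\Phi\setminus B(\ell_i,M)$ has mean $o(1)$ as $M\to\infty$; a PGFL computation gives $\E\,|f_i-f_i^{(M)}|\le 1-\exp\!\big(-2\pi\lambda\int_M^\infty\frac{Tr_i^\alpha u}{u^\alpha+Tr_i^\alpha}\,\ud u\big)=:\delta_i(M)\to0$ uniformly in $v$ (where $f_i^{(M)}$ restricts the product to $\Phi\cap B(\ell_i,M)$, and the event that the serving BS lies outside $B(\ell_i,M)$, of probability $e^{-\lambda\pi M^2}$, is absorbed into the error). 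For $v>2M$ the balls $B(\ell_1,M)$ and $B(\ell_2,M)$ are disjoint, so by the independence property of the PPP $f_1^{(M)}$ and $f_2^{(M)}$ are independent and $\E[f_1^{(M)}f_2^{(M)}]=\E[f_1^{(M)}]\,\E[f_2^{(M)}]$; combining the estimates, $|\E[f_1f_2]-\E[f_1]\,\E[f_2]|\le 2(\delta_1(M)+\delta_2(M))$ for every $v>2M$, and sending $v\to\infty$ then $M\to\infty$ yields (ii).

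It remains to evaluate one marginal, say at $\ell_2$, which is $v$-independent. Conditioning on $R_2=r_2$, closest-BS association makes the interferers a PPP of intensity $\lambda$ on $\{\|x\|>r_2\}$, and since $\{\sir_2>T\}=\{h_{x_2}(2)>Tr_2^\alpha I(2)\}$ with $h_{x_2}(2)\sim\exp(1)$, the PGFL gives
\begin{align}
\P(\sir_2>T\mid r_2)=\exp\!\bigg(-2\pi\lambda\int_{r_2}^{\infty}\bigg(1-\frac{1}{1+Tr_2^\alpha u^{-\alpha}}\bigg)u\,\ud u\bigg)=\exp\!\big(-\pi\lambda r_2^2\,\rho(T,\alpha)\big),
\end{align}
the last step by the substitution $u\mapsto r_2u$ and the identity $\int_1^\infty\frac{2Tt}{t^\alpha+T}\,\ud t=T^{2/\alpha}\int_{T^{-2/\alpha}}^\infty\frac{\ud w}{1+w^{\alpha/2}}=\rho(T,\alpha)$. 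Deconditioning with $f_{R_2}(r_2)=2\lambda\pi r_2e^{-\lambda\pi r_2^2}$ gives $\P(\sir_2>T)=\int_0^\infty 2\pi\lambda r_2\,e^{-\pi\lambda r_2^2(1+\rho(T,\alpha))}\,\ud r_2=(1+\rho(T,\alpha))^{-1}$, and by stationarity of $\Phi$ the same value holds at $\ell_1$; multiplying gives the claimed limit.

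The step I expect to be the main obstacle is (ii): rigorously showing that all finite-$v$ dependence washes out, both the geometric coupling of the serving BSs $x_1,x_2$ through the handoff region $\mathbf{C}$ and the interferers shared by $I(1)$ and $I(2)$. The truncation estimate above settles it cleanly but leans on $\alpha>2$ for integrability of the interference tail and on the $v$-uniformity (and $r_i$-uniformity on compacts) of $\delta_i(M)$; the alternative of letting $v\to\infty$ inside the closed forms of Theorems~\ref{Thm:Theorem 3}--\ref{Thm:Theorem 4} by dominated convergence is more in the paper's style but fiddlier, since the integration limits (e.g.\ $\int_{r_{12}}^{v+r_1}$) and the auxiliary term $\mathcal{B}_1$ themselves depend on $v$ and must be controlled as $r_{12}\to\infty$.
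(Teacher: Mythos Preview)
Your proposal is correct and takes a genuinely different route from the paper's proof. The paper works directly from the handoff expression in Theorem~\ref{Thm:Theorem 4}: it observes $\P(H|r_1,\theta)\to 1$ and that $R_2$ becomes Rayleigh in the disjoint-circle regime, then in one heuristic step (labelled (a)) asserts that ``under $v\to\infty$, two different instances of the point process are observed at the two locations,'' which lets it split the expectation into a product; the PGFL and deconditioning then give $(1+\rho(T,\alpha))^{-2}$. Your argument instead conditions on $\Phi$, writes the joint coverage as $\E[f_1(\Phi)f_2(\Phi)]$, and proves the factorisation by a truncation to disjoint balls $B(\ell_i,M)$ together with the spatial independence of the PPP. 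What this buys you is a rigorous justification of exactly the step the paper leaves heuristic; conversely, the paper's route is shorter and stays within the machinery already built in Theorems~\ref{Thm:Theorem 3}--\ref{Thm:Theorem 4}, at the cost of not spelling out why the expectation splits. Your own remark that the alternative of pushing $v\to\infty$ through the $v$-dependent integrals and $\mathcal{B}_1$ is ``more in the paper's style but fiddlier'' is accurate: that is precisely what the paper does, and it indeed glosses over the limit. One small point to tighten in your truncation: since $r_i$ is random, make the localisation two-stage---first restrict to $\{r_i\le K\}$ (probability $\ge 1-e^{-\lambda\pi K^2}$, uniformly in $v$), then choose $M>K$ so that $\sup_{r\le K}\big(1-\exp(-2\pi\lambda\int_M^\infty Tr^\alpha u/(u^\alpha+Tr^\alpha)\,\ud u)\big)$ is small; this makes the ``$r_i$-uniformity on compacts'' you flag explicit.
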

\begin{figure}[t!]
\centering
\begin{subfigure}{}
\centering
\includegraphics[width=0.47 \linewidth]{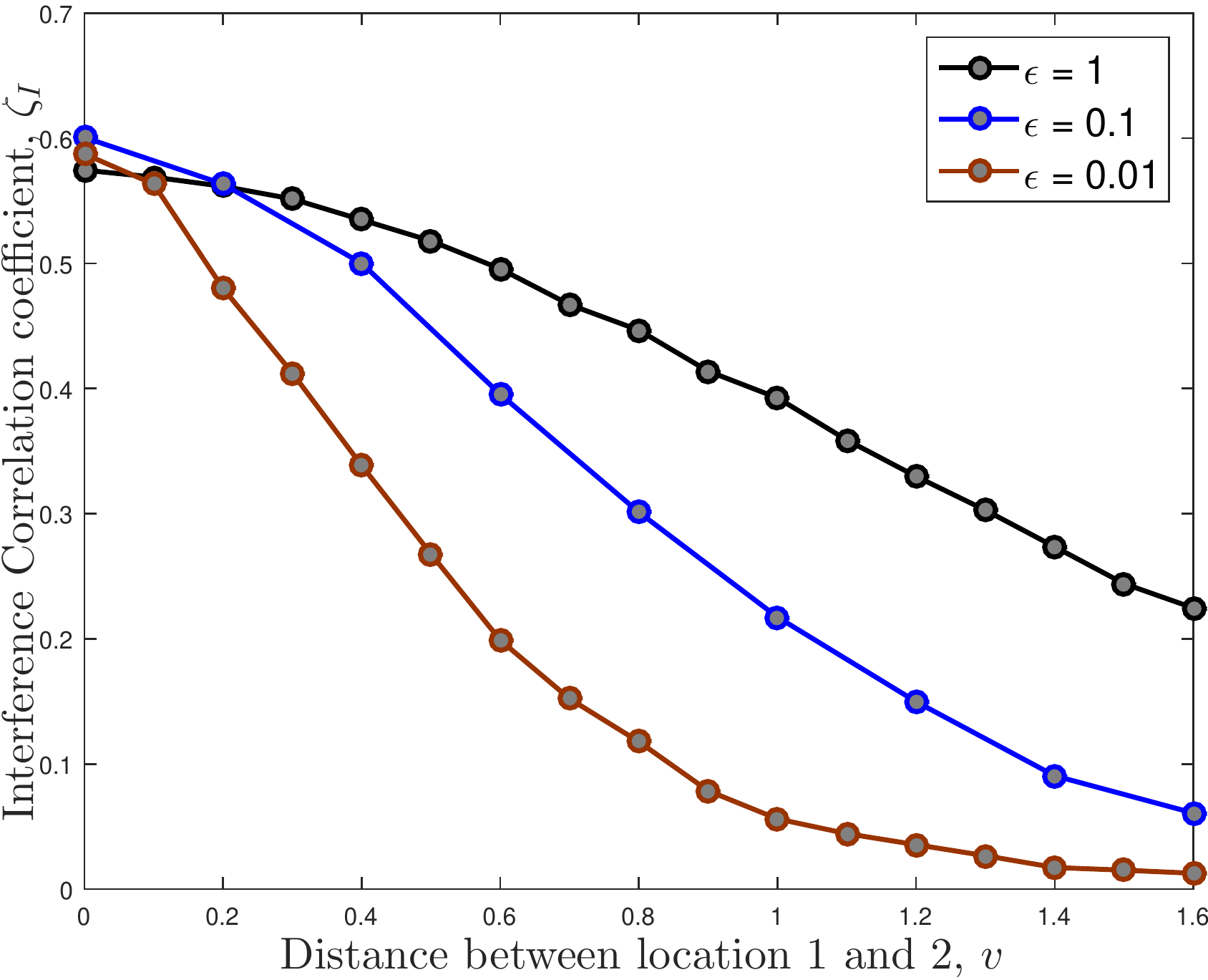}
\label{Fig:CorrCoeffCellular}
\end{subfigure}
\begin{subfigure}{}
\centering
\includegraphics[width=0.47 \linewidth]{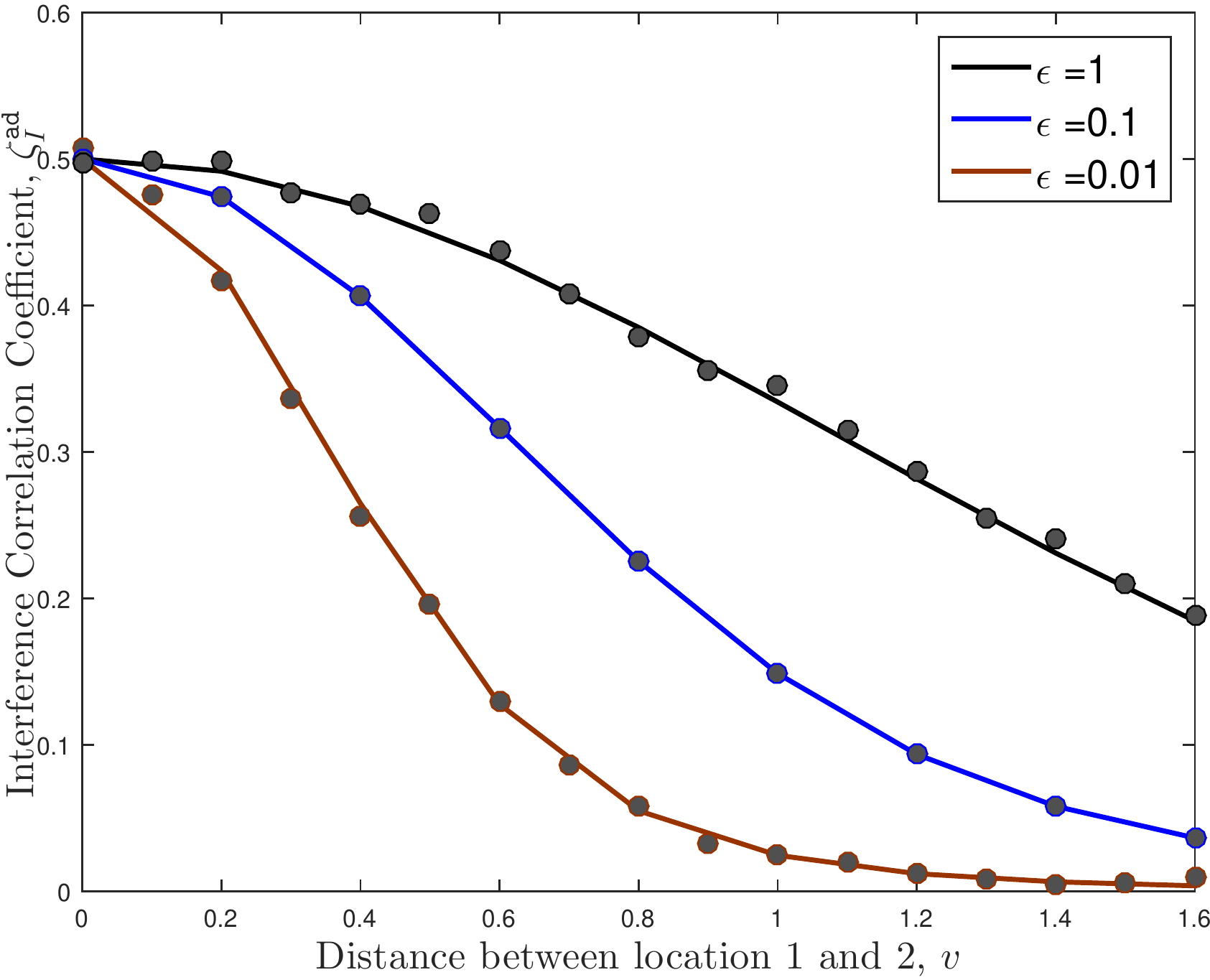}
\label{Fig:Asym}
\end{subfigure}
\caption{ Effect of $v$ on interference correlation  coefficient in (\emph{left}) cellular networks and (\emph{right}) ad hoc networks. Here, $\epsilon$ is the path-loss function parameter.}
\label{Fig:CorrCoeff}
\end{figure}

\section{Results and Discussion} \label{Sec:sec5}
In this section, we validate the accuracy of the analytical results (interference correlation coefficient and joint coverage probability) by means of simulations. In all simulations, we set the SIR threshold, $T$ as 0 dB and path loss exponent $\alpha = 4$, unless mentioned otherwise.
\subsection{Spatio-temporal interference correlation coefficient}
\subsubsection{Effect of distance $v$} Fig. \ref{Fig:CorrCoeff} (\emph{left}) and (\emph{right}) plot the interference correlation coefficient between two spatial locations $\ell_1$ and $\ell_2$ separated by a distance $v$ for cellular networks and ad hoc networks respectively. The interference correlation coefficient decreases with the distance between the two spatial locations. This coincides with our intuition that the set of interferers for two closeby user locations are similar resulting in a higher correlation, while independent interferers for spatial locations far apart result in lower interference correlation. We observe that the correlation coefficient attains the maximum vale for $v = 0$ (same spatial location or static user) which corresponds to the temporal correlation coefficient as derived in Theorem \ref{Thm:Theorem 1}. For large $v$, the correlation coefficient approaches to zero signifying uncorrelated interference powers for far away spatial locations. 
\subsubsection{Effect of path loss function parameter $\epsilon$}
As evident from Fig. \ref{Fig:CorrCoeff} (\emph{right}), interference correlation in ad hoc networks decreases with higher path loss i.e. lower path loss function parameter $\epsilon$. With higher path loss, the interference is dominated more by the transmitters closer to the user and therefore the correlation among interferers decreases overall. However, as seen from Fig. \ref{Fig:CorrCoeff} (\emph{left}), interference correlation in cellular networks does not exhibit an even trend with $\epsilon$. This is because the interference in cellular networks depends on the choice of the serving BS (closest BS) at any given spatial location as well as the path loss function. For small $v$, there is a higher probability of connecting to the same BS at both locations and thereby is a major factor in deciding interference correlation at the two spatial locations. Hence there is no such trend of interference correlation coefficient with $\epsilon$ for small $v$. However for large $v$, the two locations $\ell_1$ and $\ell_2$ are far apart (different serving BSs), which means the interference correlation depends primarily on the path loss function and decreases with $\epsilon$ like in ad hoc networks.
\subsubsection{Effect of BS density $\lambda$}
Fig. \ref{Fig:EffectOflambda} plots the effect of BS density $\lambda$ on the temporal interference correlation coefficient ($v =0$) in cellular networks. It can be seen from the figure that the correlation coefficient exhibits a bell-curve trend w.r.t. BS density $\lambda$ i.e. interference correlation increases with BS density, attains a peak and then decreases with further increase in BS density. This behaviour is not observed for ad hoc networks, where the temporal interference correlation is independent of node density. However in cellular networks, this bell curve trend signifies a non-intuitive result that there is a certain BS density $\lambda^*$ for which interference correlation is maximized and this density $\lambda^*$ varies w.r.t the path loss parameter $\epsilon$. As $\epsilon$ increases, the large-scale path loss $g(x)$ decreases, thereby requiring a lower BS density $\lambda^*$ to attain a high interference correlation. For $\epsilon \to 0$ (singular path-loss function), it requires an extremely large BS density $\lambda^*$ for maximum interference correlation and thereby interference correlation does not change w.r.t. $\lambda$ and remains at $1/\E[h^2] = 0.5$ for a Rayleigh fading channel.
\begin{figure}[t!]
\centering{
\includegraphics[width=.60\linewidth]{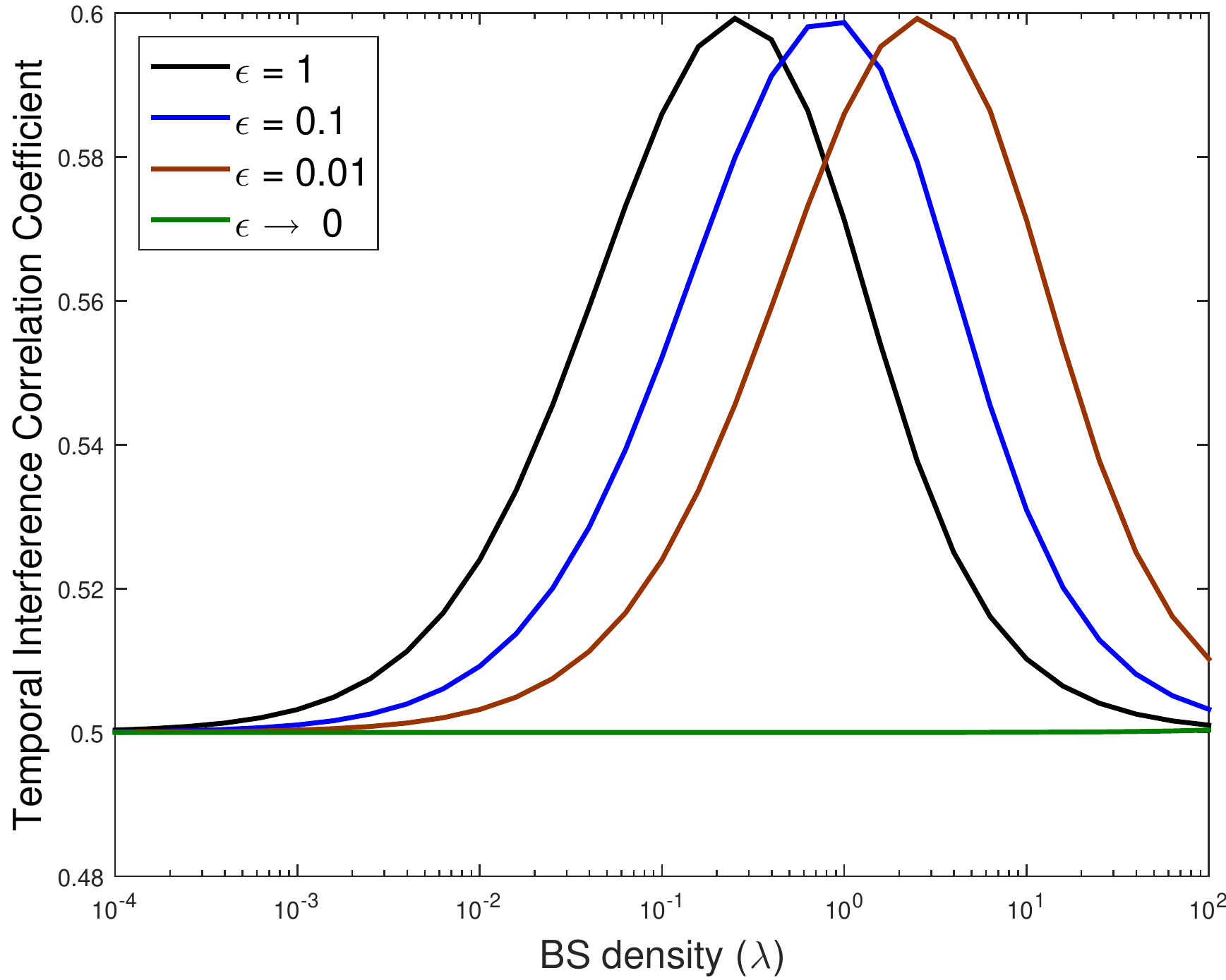}
\caption{Effect of BS density $\lambda$ on temporal interference correlation  coefficient ($v =0$) for a Rayleigh fading channel ($\E[h^2] = 2$).}
\label{Fig:EffectOflambda}
}
\end{figure}
\subsection{Joint coverage probability}
Fig. \ref{Fig:JCP} plots the joint coverage probability of a typical user with conventional handoffs at two spatial locations $\ell_1$ and $\ell_2$ seperated by a distance $v$. As evident from the figure, the joint coverage probability decreases with distance $v$ between the two spatial locations. This is explained by the decrease in interference correlation with distance $v$ as was seen in Fig. \ref{Fig:CorrCoeff}. With higher correlation, there is a higher chance of being in coverage at the second location $\ell_2$ given that the user is in coverage at the first spatial location $\ell_1$. However in an uncorrelated scenario (far away spatial locations, i.e., $v \rightarrow \infty$), the coverage probability at $\ell_2$ is independent of the coverage at $\ell_1$, which means the joint coverage probability is simply the product of individual coverage probabilities. This trend is evident from Fig. \ref{Fig:JCP}, where the joint coverage probability at $\ell_1$ and $\ell_2$ decreases from a completely correlated scenario ($v = 0$) and approaches an uncorrelated scenario  for large $v$.

Fig. \ref{Fig:JCPHandoff} compares the joint coverage probability of a typical user with handoff skipping and conventional handoffs. When handoff skipping is used, the joint coverage decreases rapidly with $v$ compared to a conventional handoff scenario. This is because of the increase in the number of interfering BSs located closer to the user than the farther located serving BS (due to handoff skipping, user connects to the same serving BS even after displacement).
Although joint coverage probability decreases rapidly when handoffs are skipped, we can avoid handoffs till a certain user displacement if the QoS is tolerable, which will naturally reduce excessive handoff delays. 

\begin{figure}[t!]
\centering{
\includegraphics[width=.50\linewidth]{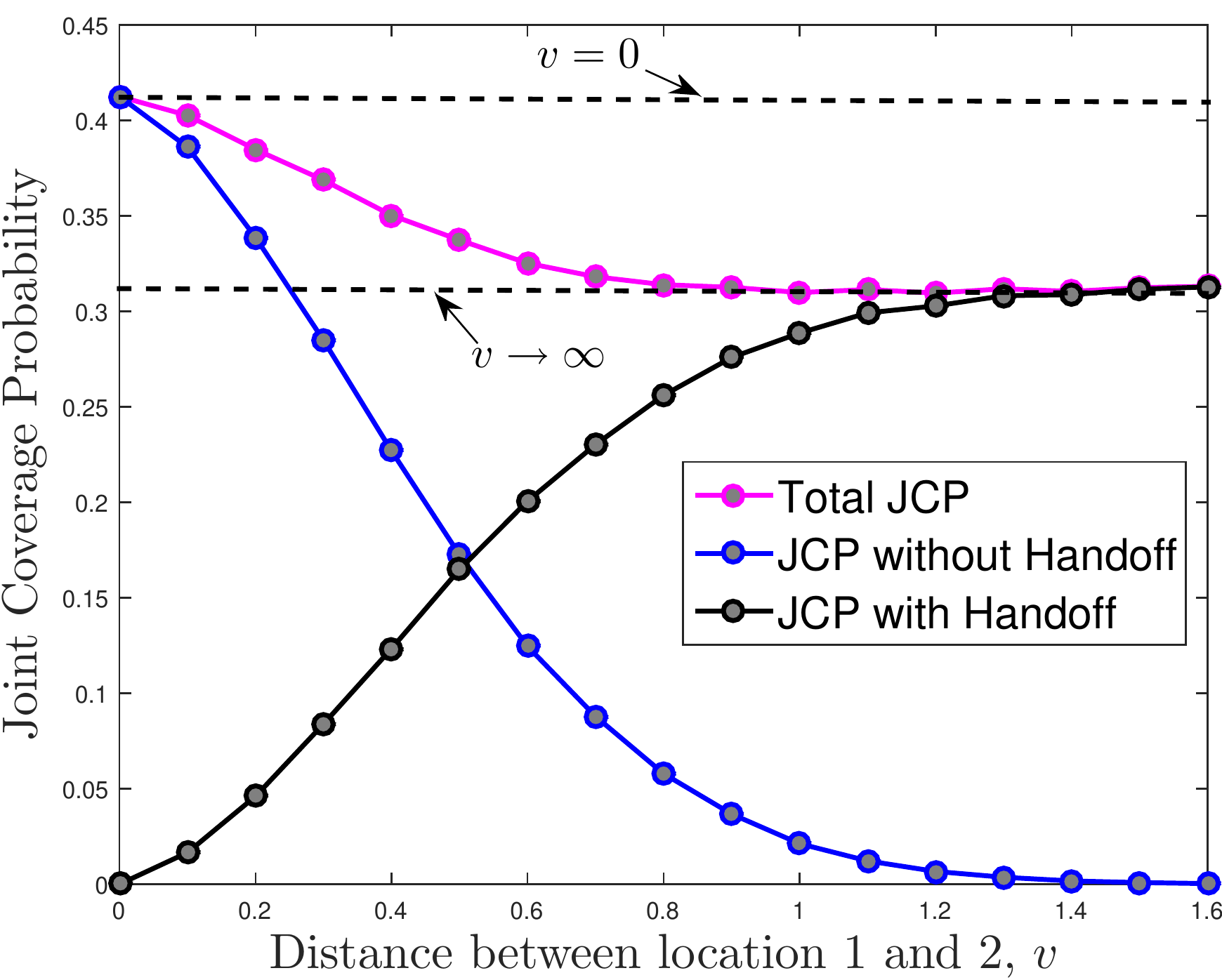}
\caption{Effect of $v$ on joint coverage probability (conventional handoff) in two spatial locations $\ell_1$ and $\ell_2$ ($\lambda = 1 ,\epsilon = 0, T = 0 \: {\rm dB}$).}
\label{Fig:JCP}
}
\end{figure}
\begin{figure}[t!]
\centering{
\includegraphics[width=.50\linewidth]{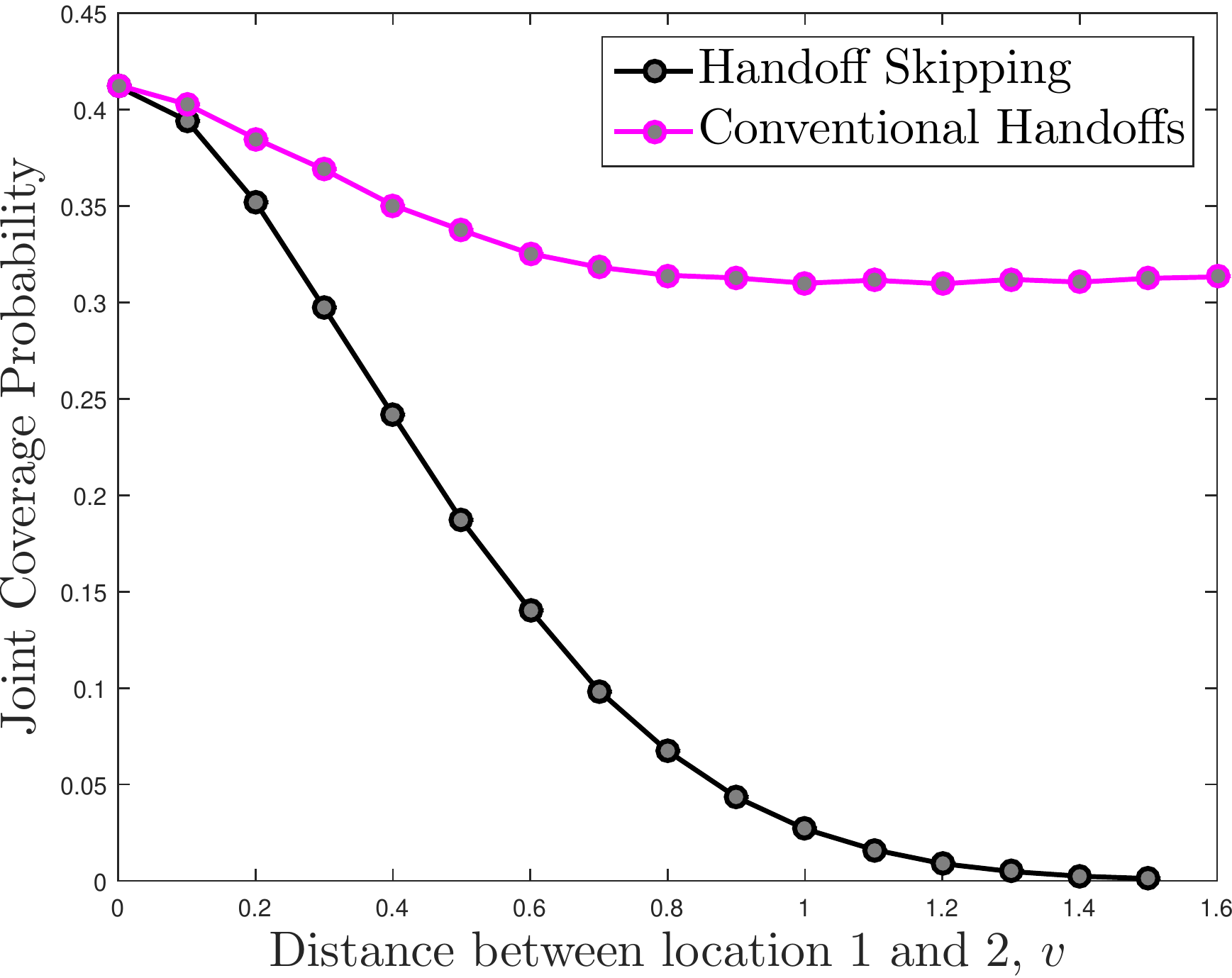}
\caption{Effect of $v$ on joint coverage probability with handoff skipping and conventional handoffs ($\lambda = 1 ,\epsilon = 0, T = 0 \: {\rm dB}$).}
\label{Fig:JCPHandoff}
}
\end{figure}
\section{Concluding Remarks}
In this paper, we provided first comprehensive framework to study spatio-temporal correlation in the interference power as well as the joint coverage probability as observed at two spatial locations in a cellular network. Considering {\em closest BS association} policy for the user at both the locations, we first characterized distributions of the distances from the two locations to their respective serving BSs. Using these results, we then derived expressions for the mean, second moment and first order cross moment of interference that ultimately led to the derivation of spatio-temporal interference correlation coefficient. As expected, interference correlation was shown to decrease with increasing distance between the two locations, eventually approaching zero when the distance between the two locations approached infinity. In order to study correlation in link successes at two spatial locations at a finite distance apart, we then derived exact results for the joint coverage probability at these two user locations under {\em handoff skipping} and {\em conventional handoff} strategies. As expected, joint coverage decreased with distance (less correlation) becoming independent at very far distances. As evident from the analysis, the analytical framework required to characterize these correlations in cellular networks is significantly more complex than its ad hoc network counterpart due to the need to carefully handle cell association policies that complicates the characterization of interference field as observed from the two spatial locations.


This work has many possible extensions. First and foremost, although this work provides exact analysis of the joint coverage probability and spatio-temporal interference correlation coefficient, the resulting expressions are not in closed form and require numerical integrations as is usually the case in most of the stochastic geometry-based analyses. While it is important to perform exact analyses for mathematical completeness as well as to complement (in some cases, circumvent) the system level simulations, it is equally important to extend such works and derive easy-to-use approximations and bounds that enable the readers to draw even better insights. From the system model side, we assumed two arbitrary spatial locations that were a distance $v$ apart. One possible extension could be to endow this separation with a distribution and study how interference correlation varies across multiple spatial locations thereby modeling an actual mobile user.


\appendix
\subsection{Proof of Theorem \ref{Thm:Theorem 2}} \label{Appendix A}
In a network where handoff skipping is used, the joint coverage probability of a typical user which moves a distance $v$ from $\ell_1$ to $\ell_2$ is given as :
\small
\begin{align*} 
&\P(\sir_1 > T, \sir_2 > T) \\
&= \E_{R_1,\Theta}[\P(\sir_1 > T, \sir_2 > T|r_1,\theta)] \\
& \stackrel{(a)}= \E_{R_1,\Theta}[P (h_{x_1}(1){r_1}^{-\alpha} > T  I(1),h_{x_1}(2){r_{12}}^{-\alpha} > T  I(2)|r_1,\theta)]\\
&\stackrel{(b)}= \E_{R_1,\Theta}\bigg[ \exp\big(-T{r_1}^{\alpha} \sum_{x \in \Phi \backslash \{x_1\}}h_{x}(1) {\|{x-v}\|}^{-\alpha}\big) \exp\big(-T{r_{12}}^{\alpha} \sum_{x \in \Phi \backslash \{x_1\}}h_{x}(2) {\|{x}\|}^{-\alpha}\big)\bigg]\\
&= \E_{R_1,\Theta}\bigg[\prod_{x \in \Phi \backslash \{x_1\}} \exp\big( -Tr_1^{\alpha}h_x(1){\|{x-v}\|}^{-\alpha}) \exp(-Tr_{12}^{\alpha}h_x(2){\|{x}\|}^{-\alpha}\big)\bigg] \\\notag
& \stackrel{(c)}= \E_{R_1,\Theta}\bigg[\exp\bigg( -\lambda \int\limits_{\R^2 \backslash \mathcal{C}_1}  1 - \frac{1}{(1+Tr_1^{\alpha}{\|{x-v}\|}^{-\alpha})(1+Tr_{12}^{\alpha}{\|{x}\|}^{-\alpha})}\bigg)\bigg] \\\notag
& \stackrel{(d)}= \E_{R_1,\Theta,\Gamma}\bigg[\exp\bigg( -\lambda \int\limits_{\R^2 \backslash (\mathcal{C}_3 \cup (\mathcal{C}_1 \setminus \mathcal{C}_3))}  \underbrace{\left(1 - \frac{1}{(1+Tr_1^{\alpha}(r^2+v^2-2rvcos\gamma )^{-\alpha/2})(1+Tr_{12}^{\alpha}r^{-\alpha})}\right)}_{F_1(r_1,r,\gamma,\theta)} r \ud r \bigg)\bigg] \\\notag
&\stackrel{(e)}= \E_{R_1,\Theta,\Gamma}\bigg[\exp\bigg(-\lambda \int\limits_{\R^2 \backslash C_3} F_1(r_1,r,\gamma,\theta) r \ud r\bigg)\exp\bigg(\lambda \int\limits_{C_1 \setminus C_3} F_1(r_1,r,\gamma,\theta)r \ud r\bigg)\bigg]  \\\notag
&\stackrel{(f)}= \E_{R_1,\Theta,\Gamma}\bigg[\exp\bigg(-2\pi\lambda \int\limits_{z_1}^{\infty} F_1(r_1,r,\gamma,\theta) \: r \ud r \bigg )\exp\bigg(2\lambda \int\limits_{|v-r_1|}^{v+r_1} \cos^{-1}\big(\frac{r^2+v^2-r_1^2}{2rv}\big)F_1(r_1,r,\gamma,\theta) \: r\ud r  \bigg)\bigg]
\end{align*}
\normalsize
where (a) follows from the definition of $\sir_1$ and $\sir_2$ in (\ref{Eq:SIR1}) and (\ref{Eq:SIR2}) respectively and considering that the serving BS at $\ell_2$ is at a distance $r_{12}$ (same serving BS $x_1$ at $\ell_1$ and $\ell_2$ due to handoff skipping). Step (b) follows from the definition of $I(1)$ and $I(2)$ in (\ref{Eq:I1}) and (\ref{Eq:I2}) respectively and the spatial independence of the fading links, (c) follows from $h_x(1) \sim \exp(1)$, $h_x(2) \sim \exp(1)$ and observing that the interference from the PPP $\Phi$ except $x_1$ is equivalent to considering an exclusion zone $\mathcal{C}_1$ in the network (as no BSs lie inside $\mathcal{C}_1$). Step (d) follows by converting the integral from Cartesian to polar coordinates where $\Gamma$ is a uniform RV in $[0,\pi]$ and denote the angle of interferer $x \in \Phi$ w.r.t. user at $\ell_2$. We also express $\mathcal{C}_1$ as the union of $\mathcal{C}_3$ and $\mathcal{C}_1 \setminus \mathcal{C}_3$ in (d), where $\mathcal{C}_3 = B(0,z_1)$ and $z_1 = \max(0,r_1-v)$ (See Fig. \ref{Fig:Scenarios}). We split the integral into the two regions in (e), while the final result follows by using the law of cosines and appropriate limits of integration for the two regions.
\subsection{Proof of Theorem \ref{Thm:Theorem 3}} \label{Appendix B}
The joint coverage probability of a typical user under the no handoff scenario (event $\bar{H}$) is given as
\small
\begin{align}
\notag &\P(\sir_1 > T, \sir_2 > T,\bar{H}) \\\notag
&= \E_{R_1,\Theta}[\P(\sir_1 > T, \sir_2 > T,\bar{H}|r_1,\theta)] \\\notag
&= \E_{R_1,\Theta}[P (h_{x_1}(1){r_1}^{-\alpha} > T  I(1),h_{x_1}(2){r_{2}}^{-\alpha} > T  I(2)|\bar{H},r_1,\theta)\P(\bar{H}|r_1,\theta)]\\\notag
&= \E_{R_1,\Theta}\bigg[ \exp\big(-T{r_1}^{\alpha} \sum_{x \in \Phi \backslash \{x_1\}}h_{x}(1) {\|{x-v}\|}^{-\alpha}\big) \exp\big(-T{r_{12}}^{\alpha} \sum_{x \in \Phi \backslash \{x_1\}}h_{x}(2) {\|{x}\|}^{-\alpha}\big)\P(\bar{H}|r_1,\theta)\bigg]\\\notag
&= \E_{R_1,\Theta}\bigg[\P(\bar{H}|r_1,\theta)\prod_{x \in \Phi \backslash \{x_1\}} \exp\big( -Tr_1^{\alpha}h_x(1){\|{x-v}\|}^{-\alpha}) \exp(-Tr_{12}^{\alpha}h_x(2){\|{x}\|}^{-\alpha}\big)\bigg] \\\notag
& \stackrel{(a)}= \E_{R_1,\Theta}\bigg[\P(\bar{H}|r_1,\theta)\exp\bigg( -\lambda \int\limits_{\R^2 \backslash \mathcal{C}_1\cup \mathcal{C}_2}  1 - \frac{1}{(1+Tr_1^{\alpha}{\|{x-v}\|}^{-\alpha})(1+Tr_{12}^{\alpha}{\|{x}\|}^{-\alpha})}\bigg)\bigg] \\ \label{Eq:F defn}
& = \E_{R_1,\Theta,\Gamma}\bigg[\P(\bar{H}|r_1,\theta)\exp\bigg( -\lambda \int\limits_{\R^2 \backslash \mathcal{C}_1\cup \mathcal{C}_2}  \underbrace{\left(1 - \frac{1}{(1+Tr_1^{\alpha}(r^2+v^2-2rvcos\gamma )^{-\alpha/2})(1+Tr_{12}^{\alpha}r^{-\alpha})}\right)}_{F_1(r_1,r,\gamma,\theta)} r \ud r \bigg)\bigg] \\\notag
&\stackrel{(b)}= \E_{R_1,\Theta,\Gamma}\bigg[\P(\bar{H}|r_1,\theta)\exp\bigg(-\lambda \int\limits_{\R^2 \backslash C_2} F_1(r_1,r,\gamma,\theta) r \ud r\bigg)\exp\bigg(\lambda \int\limits_{C_2 \setminus C_1} F_1(r_1,r,\gamma,\theta)r \ud r\bigg)\bigg]  \\\notag
&\stackrel{(c)}= \E_{R_1,\Theta,\Gamma}\bigg[\P(\bar{H}|r_1,\theta)\exp\big(-2\pi\lambda \int\limits_{r_{12}}^{\infty} F_1(r_1,r,\gamma,\theta) \: r \ud r \big )\exp\big(2\lambda \int\limits_{r_{12}}^{v+r_1} \cos^{-1}\big(\frac{r^2+v^2-r_1^2}{2rv}\big)F_1(r_1,r,\gamma,\theta) \: r\ud r  \big)\bigg] 
\end{align}
\normalsize
where (a) follows by observing that the interference from PPP $\Phi$ except the serving BS $x_1$ in a no handoff scenario (see Fig. \ref{Fig:SystemModel} (b)) is equivalent to considering an exclusion zone $\mathcal{C}_1 \cup \mathcal{C}_2$  in the network (no BSs lie inside $\mathcal{C}_1 \cup \mathcal{C}_2$). Step (b) follows by splitting the integral into two integration regions, while the final step (c) follows by using appropriate limits of integration for the two regions and using the law of cosines in the second integral.
\subsection{Proof of Theorem \ref{Thm:Theorem 4}} \label{Appendix C}
By definition, the joint coverage probability of a typical user under handoff scenario (event $H$) is given as:
\small
\begin{align}
\notag &\P(\sir_1 > T, \sir_2 > T,H) \\\notag
&= \E_{R_1,\Theta}[\P(\sir_1 > T, \sir_2 > T,H|r_1,\theta)] \\\notag
& \stackrel{(a)}= \E_{R_1,\Theta}[P (h_{x_1}(1){r_1}^{-\alpha} > T  I(1),h_{x_2}(2){r_{2}}^{-\alpha} > T  I(2)|H,r_1,\theta)\P(H|r_1,\theta)]\\  \label{Eq:InfiniteMobility}
&= \E_{R_1,R_2|H,\Theta}\bigg[ \exp\big(-T{r_1}^{\alpha} \sum_{x \in \Phi \backslash \{x_1\}}h_{x}(1) {\|{x-v}\|}^{-\alpha}\big) \exp\big(-T{r_{2}}^{\alpha} \sum_{x \in \Phi \backslash \{x_2\}}h_{x}(2) {\|{x}\|}^{-\alpha}\big)\P(H|r_1,\theta)\bigg] \\\notag
& \stackrel{(b)}= \E_{R_1,R_2|H,\Theta}\bigg[\P(H|r_1,\theta) \exp(-T{r_1}^{\alpha} h_{x_2}(1) {\|{x_2-v}\|}^{-\alpha}) \exp(-T{r_2}^{\alpha} h_{x_1}(2) {\|{x_1}\|}^{-\alpha}) \\\notag & \qquad \qquad \prod_{x \in \Phi \backslash (\{x_1\} \cup \{x_2\}) } \exp\big( -Tr_1^{\alpha}h_x(1){\|{x-v}\|}^{-\alpha}) \exp(-Tr_{2}^{\alpha}h_x(2){\|{x}\|}^{-\alpha}\big)\bigg] \\\notag
& \stackrel{(c)}= \E_{R_1,R_2|H,\Theta}\bigg[\P(H|r_1,\theta)\frac{1}{1+Tr_1^{\alpha}{\|{x_2-v}\|}^{-\alpha}}\frac{1}{1+Tr_2^{\alpha}{\|{x_1}\|}^{-\alpha}} \\\notag & \qquad \qquad \exp\bigg( -\lambda \int\limits_{\R^2 \backslash \mathcal{C}_1\cup \mathcal{C}_2}  1 - \frac{1}{(1+Tr_1^{\alpha}{\|{x-v}\|}^{-\alpha})(1+Tr_{2}^{\alpha}{\|{x}\|}^{-\alpha})}\bigg)\bigg] \\\notag
& = \E_{R_1,R_2|H,\Theta,\Gamma}\bigg[\P(H|r_1,\theta)\frac{1}{1+Tr_1^{\alpha}{\|{x_2-v}\|}^{-\alpha}}\frac{1}{1+Tr_2^{\alpha}{\|{x_1}\|}^{-\alpha}} \\\notag & \qquad \qquad  \exp\bigg( -\lambda \int\limits_{\R^2 \backslash \mathcal{C}_1\cup \mathcal{C}_2}  \underbrace{\left(1 - \frac{1}{(1+Tr_1^{\alpha}(r^2+v^2-2rvcos\gamma )^{-\alpha/2})(1+Tr_{2}^{\alpha}r^{-\alpha})}\right)}_{F_2(r_1,r_2,r,\gamma)} r \ud r \bigg)\bigg] \\\notag
&= \E_{R_1,R_2|H,\Theta,\Gamma}\bigg[\P(H|r_1,\theta)\frac{1}{1+Tr_1^{\alpha}{\|{x_2-v}\|}^{-\alpha}}\frac{1}{1+Tr_2^{\alpha}{\|{x_1}\|}^{-\alpha}} \\\notag & \qquad \qquad \exp\bigg(-\lambda \int_{\R^2 \backslash C_2} F_2(r_1,r_2,r,\gamma) r \ud r\bigg)\exp\bigg(\lambda \underbrace{\int_{C_1 \setminus C_2} F_2(r_1,r_2,r,\gamma)r \ud r}_{\mathcal{B}_1(r_1,r_2,\gamma)} \bigg)\bigg]  \\\notag
& \stackrel{(d)}= \E_{R_1,R_2|H,\Theta,\Gamma}\bigg[\P(H|r_1,\theta)\frac{1}{1+Tr_1^{\alpha}{\|{x_2-v}\|}^{-\alpha}}\frac{1}{1+Tr_2^{\alpha}{\|{x_1}\|}^{-\alpha}} \\\notag & \qquad \qquad \exp\bigg(-2\pi\lambda \int\limits_{r_{2}}^{\infty} F_2(r_1,r_2,r,\gamma) \: r \ud r \bigg )\exp\bigg(\lambda \mathcal{B}_1(r_1,r_2,\gamma)  \bigg)\bigg]
\end{align}
\normalsize
where (a) follows by using the definition of $\sir_1$ and $\sir_2$ for a handoff scenario (serving BS $x_1$ at distance $r_1$ at $\ell_1$ and serving BS $x_2$ at distance $r_2$ at $\ell_2$) and conditioning w.r.t. event $H$ (occurence of handoff). Step (b) follows by splitting the interference into three parts: i) First term corresponds to interference from BS $x_2$ at $\ell_1$, ii) Second term signifies the interference from BS $x_1$ at $\ell_2$, and iii) third term corresponds to interference from all BSs $x \in  \Phi$ except $x_1$ and $x_2$ (equivalent to exclusion zone of $\mathcal{C}_1 \cup \mathcal{C}_2$). Step (c) follows by taking expectation over the fading links and applying Campbell's law to the interference in the third term, while the final step (d) follows by applying suitable limits of integration to the different integration regions.  The limits of the integration region $\mathcal{C}_1 \setminus \mathcal{C}_2$ depend on the three cases (disjoint, intersecting or engulfed) as defined in Remark 1, with its lower limit $a$ (See Fig. \ref{Fig:Scenarios}) taking values ${v-r_1,r_2}$ for cases 1 and  2 respectively. The integration region is zero for case 3 as $\mathcal{C}_1 \setminus \mathcal{C}_2 = \phi$ ($\mathcal{C}_1$ is engulfed inside $\mathcal{C}_2$). The integral $\mathcal{B}_1(r_1,r_2,\gamma)$ is summarized below:
\begin{align}  \label{B1Integral}
\mathcal{B}_1(r_1,r_2,\gamma) &= \left\{
     \begin{array}{lr}
       \int\limits_{v-r_1}^{v+r_1} 2\cos^{-1}\big(\frac{r^2+v^2-r_1^2}{2rv}\big)F(r_1,r_2,r,\gamma) \: r\ud r ,  \qquad \text{case 1} \\
       \int\limits_{r_{2}}^{v+r_1} 2\cos^{-1}\big(\frac{r^2+v^2-r_1^2}{2rv}\big)F(r_1,r_2,r,\gamma) \: r\ud r ,   \qquad \text{case 2}\\
       0, \qquad \qquad \qquad \qquad \: \: \: \qquad \qquad  \qquad \qquad \: \: \: \: \: \: \text{case 3}\\
      \end{array}
   \right.
\end{align}
\bibliographystyle{IEEEtran}
\bibliography{spatio-temporal correlation_v5_arxiv}
\end{document}